\documentclass[10pt,reqno]{article}
\usepackage{graphicx} 
\usepackage[utf8]{inputenc}
\usepackage{physics}
\usepackage[margin=2.5cm]{geometry}
\usepackage{mathtools,amsmath,amsfonts,amssymb,amsthm}
\usepackage{mathdots}
\usepackage{comment}
\usepackage{enumerate}
\usepackage{soul} 
\usepackage{bm}
\usepackage[dvipsnames]{xcolor}
\usepackage{hyperref}
\hypersetup{
    colorlinks=true,
    linkcolor=blue,
    citecolor=violet   
    }
\usepackage[nameinlink]{cleveref}
\Crefname{subsection}{Subsection}{subsections}
\allowdisplaybreaks

\newcommand\F{\mathbb{F}}
\newcommand\Fqt{\mathbb{F}_{q^t}}
\newcommand\Fqr{\mathbb{F}_{q^r}}
\newcommand\Fq{\mathbb{F}_q}

\newcommand\rk{\mathrm{rk}}

\renewcommand{\tilde}{\widetilde}
\renewcommand\epsilon{\varepsilon}
\newcommand\wt{\mathrm{wt}}

\newcommand\ie{\textit{i.e.,~}}
\newcommand\eg{\textit{e.g.,~}}
\theoremstyle{plain}
\newtheorem{theorem}{Theorem}[section]
\theoremstyle{plain}
\newtheorem{lemma}[theorem]{Lemma}
\theoremstyle{plain}

\theoremstyle{definition}
\newtheorem{definition}[theorem]{Definition}
\theoremstyle{definition}

\theoremstyle{definition}
\newtheorem{example}{Example}
\theoremstyle{plain}
\newtheorem{corollary}[theorem]{Corollary}
\theoremstyle{remark}
\newtheorem{remark}[theorem]{Remark}

\allowdisplaybreaks

\title{Generalized Skew Multivariate Goppa Codes}
\author{
Elena Berardini\thanks{CNRS; IMB, University of Bordeaux, France. Email: elena.berardini@math.u-bordeaux.fr}
\and
Pranav Trivedi\thanks{Department of Mathematics, UC Berkeley, Berkeley, CA. 
The author now works for Fujitsu Research of America. 
Email: pranavtrivedi@berkeley.edu}
}
\date{}

\begin{document}
\maketitle
\begin{abstract}
We introduce Generalized Skew Multivariate Goppa codes relying on the theory of multivariate Ore polynomials. These codes contain Generalized Skew Goppa codes as a special case. By providing a new parity-check matrix for the latter, we show that, under some hypotheses, they are subfield subcodes of Generalized Skew Reed--Solomon codes. This result turns out to be helpful to study the parameters of Skew Multivariate Goppa codes, for which we provide bounds on their dimension and minimum distance.
\end{abstract}

\section{Introduction}
Among linear codes in the Hamming metric, Goppa codes \cite{Gopppa70,Goppa71} stand out for their many interesting properties: they can be efficiently decoded, their dimension can often be increased without decreasing the minimum distance, and they are so far the only linear codes for which the McEliece cryptosystem stays partially unbroken. Goppa codes have many different, though equivalent, characterizations,  all involving the use of a polynomial $g\in\Fq[x]$.  To name some, Goppa codes are alternant, that is, subfield subcodes of some Generalized Reed--Solomon codes. As such, they can also be defined through residues of differentials over the projective line. Furthermore, they also have a concrete description in terms of their parity-check matrix. 

Because of their features, Goppa codes and their generalizations have been studied extensively. In the Hamming metric, multivariate Goppa codes were introduced by replacing $g$ with a multivariate polynomial \cite{Lopez_2023}. Successively, with the rising interest in the rank metric, Skew and Generalized Skew Goppa codes were defined respectively in \cite{wang18,gomez-torrecillas2023} by replacing $g$ with an Ore polynomial. Finally, let us mention that Goppa codes in the sum-rank metric, called linearized Goppa codes, were introduced in \cite{CD23}, using skew residues. 
\smallskip

The first aim of this paper is to define Generalized Skew Multivariate Goppa codes and study their parameters, which are the multivariate version of Generalized Skew Goppa (GSG) codes, as introduced in \cite{gomez-torrecillas2023}. To this end, we also propose a new form of the parity-check matrix of GSG codes, which, under some minor hypotheses, allows us to show that these codes are subfield subcodes of Generalized Skew Reed--Solomon codes \cite{BU14}, thus providing a characterization which is well-known for classical Goppa codes but was missing in the skew case. 

The paper is organized as follows. After some preliminaries on coding theory, in \Cref{sec:prelGoppa}, we recall Goppa codes in the Hamming metric and their multivariate version. In \Cref{sec:skewgoppa} we propose a systematic study of Generalized Skew Goppa codes and relate their construction with that of Generalized Skew Reed--Solomon codes. Finally, in \Cref{sec:skew.multivariate.goppa}, we exploit the theory of multivariate Ore polynomials recently developed in \cite{berardini24} to introduce the multivariate version of GSG codes, and study their parameters.

\section{Goppa codes in the Hamming metric and their variants}\label{sec:prelGoppa}
After recalling some preliminaries on coding theory, this section introduces the original construction of Goppa codes in the Hamming metric \cite{Gopppa70,Goppa71} and its multivariate version, introduced in \cite{Lopez_2023}. 

\subsection{Preliminaries}\label{sec:preliminaries}
Throughout the paper, we fix $\Fqt$ an extension of degree $t$ of the finite field $\Fq$, and $n$ a positive integer. A \emph{linear code} is a subspace $C \subseteq \Fqt^n$. We often say that $C$ is an $[n,k,d]$ linear code where $n$ is the block length, $k = \dim_{\Fqt} C$ and $d$ is the minimum Hamming distance defined as \[d = \min_{c \in C, c \neq 0} \wt(c)=\min_{c \in C, c \neq 0} \mid \{i\in\{1,\dots,n\}\,:\, c_i\neq 0\}\mid.\]
Given a subfield $\Fqr$ of $\Fqt$, one can consider $C\vert_{\Fqr} = C \cap \Fqr^n$ which is called a \emph{subfield subcode}. It is the restriction of $C$ to the codewords which have entries only in $\Fqr$.

Attached to a finite field extension, we also have the \emph{trace map} \[\Tr: \Fqt \to \Fqr\] given by \[\Tr(a) = \sum_{i=0}^{\frac{t}{r}-1} a^{(q^r)^i}\] which allows us to define the trace code.

\begin{definition}\label{def:trace_code}
    Given a linear code $C \subseteq \Fqt^n$, for $c = (c_1,\ldots, c_n)$, we define $\Tr(c) = (\Tr(c_1),\ldots, \Tr(c_n)) \in \Fqr^n$. We define the \emph{trace code} of $C$ to be $\Tr(C) = \{\Tr(c)\,:\, c \in C\}$.
\end{definition}

The following is a well-known result due to Delsarte, relating subfield subcodes and trace codes. For a proof, we refer the reader to \cite[Thm.~9.1.2]{stichtenoth}.
\begin{theorem}[Delsarte]\label{th:delsarte}
    For a code $C \subseteq \Fqt^n$, we have \[(C\vert_{\Fqr})^\perp = \Tr(C^\perp).\]
\end{theorem}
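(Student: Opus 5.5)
The plan is the standard two-step argument: prove one inclusion directly, then get equality by comparing dimensions over $\Fqr$. Throughout, the inner product is the standard bilinear form $\langle x,y\rangle=\sum_i x_iy_i$ on $\Fqt^n$, which restricts to the standard form on $\Fqr^n$. We use two properties of $\Tr$: it is $\Fqr$-linear, and it is surjective. Surjectivity holds because $\Tr$ is a polynomial of degree $q^{t-r}<q^t$, so it cannot vanish on all of $\Fqt$.

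\emph{Inclusion $\Tr(C^\perp)\subseteq (C\vert_{\Fqr})^\perp$.} Take $c\in C\vert_{\Fqr}$ and $y\in C^\perp$. Since each $c_i\in\Fqr$ and $\Tr$ is $\Fqr$-linear,
\[\langle c,\Tr(y)\rangle=\sum_i c_i\Tr(y_i)=\Tr\Big(\sum_i c_iy_i\Big)=\Tr(0)=0.\]

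\emph{Reverse inclusion.} Both sides are $\Fqr$-subspaces of $\Fqr^n$, so it suffices to show $\Tr(C^\perp)^\perp\subseteq C\vert_{\Fqr}$, where $\perp$ is taken inside $\Fqr^n$. Let $u\in\Fqr^n$ be orthogonal to $\Tr(y)$ for every $y\in C^\perp$. We want $u\in C$. Since $C=(C^\perp)^\perp$ in $\Fqt^n$, it is enough to show $s:=\langle u,y\rangle=0$ for every $y\in C^\perp$.

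Fix such a $y$. For any $\lambda\in\Fqt$ we have $\lambda y\in C^\perp$, because $C^\perp$ is $\Fqt$-linear. Using $u\in\Fqr^n$ and the same linearity computation as above,
\[0=\langle u,\Tr(\lambda y)\rangle=\Tr(\lambda s)\quad\text{for all }\lambda\in\Fqt.\]
If $s\neq 0$, then $\lambda s$ ranges over all of $\Fqt$ as $\lambda$ does. This would force $\Tr\equiv 0$, contradicting surjectivity. Hence $s=0$, so $u\in C\cap\Fqr^n=C\vert_{\Fqr}$. Taking orthogonal complements in $\Fqr^n$ (where $\perp\perp$ is the identity) gives $(C\vert_{\Fqr})^\perp\subseteq\Tr(C^\perp)$.

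The only step with real content is the nondegeneracy argument: the trace form $(a,b)\mapsto\Tr(ab)$ is nondegenerate because $\Tr$ is not identically zero. The remaining steps are routine linear algebra.
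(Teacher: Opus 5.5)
Your proof is correct and is the standard argument. You get $\Tr(C^\perp)\subseteq (C\vert_{\Fqr})^\perp$ from the $\Fqr$-linearity of $\Tr$, and the reverse inclusion by showing $(\Tr(C^\perp))^\perp\subseteq C\vert_{\Fqr}$, using that $\lambda\mapsto\Tr(\lambda s)$ is not identically zero when $s\neq 0$; the paper gives no proof of its own and cites Stichtenoth's Theorem 9.1.2, whose proof proceeds in essentially this way.
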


Finally, we recall the following known result on the minimum distance of the dual of the tensor product of codes, which we shall use in the rest of the paper. A proof of this result can be found, for instance, in
\cite[Prop.~2.14]{barraud2025dual}.
\begin{lemma}\label{lem:distance.dual.tensor.product}
    Let $C_i \subseteq \mathbb{F}^{n_i}$ be linear codes for $i = 1,2$ and let $d_i = d(C_i)$. Then \[(C_1^\perp\otimes C_2^\perp)^\perp = C_1 \otimes \mathbb{F}^{n_2} + \mathbb{F}^{n_1} \otimes C_2\] and its minimum distance is $\min(d_1,d_2)$.
\end{lemma}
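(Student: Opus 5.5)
We prove the two claims of \Cref{lem:distance.dual.tensor.product} separately: first the identity of codes, then the value of the minimum distance. Throughout we identify $\mathbb{F}^{n_1}\otimes\mathbb{F}^{n_2}$ with the space of $n_1\times n_2$ matrices. In this picture $C_1^\perp\otimes C_2^\perp$ is spanned by the rank-one matrices $uv^T$ with $u\in C_1^\perp$ and $v\in C_2^\perp$. The inner product is $\langle A,B\rangle=\sum_{i,j}A_{ij}B_{ij}$, and it satisfies $\langle a b^T, uv^T\rangle=\langle a,u\rangle\langle b,v\rangle$.

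\textbf{Step 1: the identity.} The inclusion $\supseteq$ is immediate. If $a\in C_1$, then $\langle ab^T,uv^T\rangle=0$ for every $u\in C_1^\perp$, and symmetrically for $C_2$; by linearity both summands lie in $(C_1^\perp\otimes C_2^\perp)^\perp$. For equality we compare dimensions. Put $k_i=\dim C_i$. The left-hand side has dimension $n_1n_2-(n_1-k_1)(n_2-k_2)$. The right-hand side is a sum of two subspaces whose intersection is $C_1\otimes C_2$, which one checks by choosing bases adapted to $C_1\subseteq\mathbb{F}^{n_1}$ and $C_2\subseteq\mathbb{F}^{n_2}$. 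Its dimension is therefore $k_1n_2+n_1k_2-k_1k_2$, which is the same number.

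\textbf{Step 2: upper bound on the distance.} Suppose $d_1\le d_2$, and let $a\in C_1$ have weight $d_1$. Then $ae_1^T$ is a nonzero codeword of weight $d_1$, so the minimum distance is at most $\min(d_1,d_2)$.

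\textbf{Step 3: lower bound on the distance.} This is the main step. Let $M\neq 0$ lie in $(C_1^\perp\otimes C_2^\perp)^\perp$. Pairing $M$ against $uv^T$ gives $u^TMv=0$ for all $u\in C_1^\perp$ and $v\in C_2^\perp$. We consider two cases.
\begin{itemize}
\item Suppose some column $Mv$ of $M$, with $v\in C_2^\perp$, is nonzero. Then $u^T(Mv)=0$ for all $u\in C_1^\perp$, so $Mv\in C_1$ and hence $\wt(Mv)\ge d_1$. Each nonzero coordinate of $Mv$ lies in a nonzero row of $M$, so $M$ has at least $d_1$ nonzero rows and therefore at least $d_1$ nonzero entries.
\item Otherwise $Mv=0$ for all $v\in C_2^\perp$. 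Then every row of $M$ is orthogonal to $C_2^\perp$, so every row lies in $C_2$. Since $M\neq0$, some row is a nonzero codeword of $C_2$, which gives $\wt(M)\ge d_2$.
\end{itemize}
In either case $\wt(M)\ge\min(d_1,d_2)$, and together with Step 2 this proves the lemma.

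The only delicate point in Step 3 is the first case, where the argument has to count nonzero rows of $M$ rather than entries of the column $Mv$ itself.
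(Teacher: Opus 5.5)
Your proof is correct and complete. The paper does not prove this lemma itself; it only cites \cite[Prop.~2.14]{barraud2025dual}. So there is no in-paper argument to compare against, and your write-up works as a self-contained replacement for that citation.

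Your route also makes clear that the distance claim does not depend on the dimension count in Step 1. Step 2 needs only the easy inclusion $\supseteq$, to know that $ae_1^T$ lies in $(C_1^\perp\otimes C_2^\perp)^\perp$. Step 3 works directly from the characterization that $u^TMv=0$ for all $u\in C_1^\perp$ and $v\in C_2^\perp$. The dimension count is needed only for the equality of codes.

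The argument is independent of the field. It extends by induction to the $m$-fold statement that the paper actually invokes in \Cref{thm:multivariate.goppa.code.parameters} and \Cref{thm:gsmg.code.parameters}: replace $C_1$ by $(C_1^\perp\otimes\cdots\otimes C_{m-1}^\perp)^\perp$, whose distance is $\min_{i<m} d_i$ by the induction hypothesis.

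Two cosmetic points. In Step 3, $Mv$ is a linear combination of the columns of $M$ rather than a column, though your counting of nonzero rows handles this correctly. Step 2 tacitly assumes that the code realizing $\min(d_1,d_2)$ is nonzero, which amounts to the usual convention $d(\{0\})=\infty$.
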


We now proceed to describe Goppa codes and their nature as subfield subcodes of particular Generalized Reed--Solomon codes.

\subsection{Goppa codes}\label{sec:Goppa}

For $k$ a positive integer, a Generalized Reed--Solomon (GRS) code over $\Fqt$ is defined as \begin{equation*}
    \mathrm{GRS}_k(S, v) := \{(v_1\cdot f(s_1),\ldots, v_n\cdot f(s_n)): f \in \Fqt[X]_{< k}\},
\end{equation*} where $S = \{s_1,\ldots, s_n\}$ is a set of distinct elements of $\Fqt$ and $v = \{v_1,\ldots, v_n\}$ a set of not necessarily distinct elements of $\Fqt^*$. When $v_i=1$ for all $i$, we recover classical Reed--Solomon codes. Taking $k\leq n$, Generalized Reed--Solomon codes have parameters $[n,k,n+1-k]$, thus attaining the Singleton bound.

Let $g \in \Fqt[X]$ be a polynomial that does not vanish at any entries in $S$. Throughout the paper, we consider an integer $r$ such that $r\mid t$, so that $\Fqr$ is a subfield of $\Fqt$. The classical Goppa code associated to $(S, g, \Fqr)$ \cite[Ch.~12,\S~3]{MacWilliams_Sloane_1977} is defined as \begin{equation*}
    \Gamma(S, g, \Fqr) := \left\{c = (c_1,\ldots, c_n) \in \Fqr^n \, : \, \sum_{i=1}^n \frac{c_i}{X - s_i}\equiv 0 \mod g\right\}.
\end{equation*}
Goppa codes have length $n$, dimension $k\geq n-\deg g\cdot (t/r)$ and minimum distance $d\geq \deg g +1$.

When in the definition of GRS codes we take $v_i=g(s_i)^{-1}$ for some polynomial $g\in\Fqt[X]$ of degree $k$ which does not vanish at any of the $s_i$'s, we obtain so-called GRS codes via a Goppa code:
\begin{equation}
    \mathrm{GRS}_{\deg g} (S, g) := \{(g(s_1)^{-1}f(s_1),\ldots, g(s_n)^{-1}f(s_n)): f \in \Fqt[X]_{< \deg g}\}.
\end{equation} 

It is well known that the generator matrix for $\mathrm{GRS}_{\deg g}(S,g)$ is \begin{align*}G &= \begin{bmatrix}
    g(s_1)^{-1} & g(s_2)^{-1} & \cdots & g(s_n)^{-1}\\
    g(s_1)^{-1}s_1 & g(s_2)^{-1}s_2 & \cdots & g(s_n)^{-1}s_n\\
    \vdots & \vdots & \ddots & \vdots\\
    g(s_1)^{-1}s_1^{\deg g - 1} &g(s_2)^{-1}s_2^{\deg g - 1} &\cdots &g(s_n)^{-1}s_n^{\deg g - 1}\\
\end{bmatrix}\\ &= \begin{bmatrix}
    1 & 1 & \cdots & 1\\
    s_1 & s_2 & \cdots & s_n\\
    \vdots & \vdots & \ddots & \vdots\\
    s_1^{\deg g - 1} &s_2^{\deg g - 1} &\cdots &s_n^{\deg g - 1}\end{bmatrix}\begin{bmatrix}
    g(s_1)^{-1} & 0 & \cdots & 0\\
    0 & g(s_2)^{-1}  
    & \cdots & 0\\
    \vdots & \vdots & \ddots & \vdots\\
   0 & 0 &\cdots & g(s_n)^{-1}\\
\end{bmatrix}\end{align*} and it is a parity-check matrix for $\Gamma(S,g, \Fqr)$ \cite[p.~340]{MacWilliams_Sloane_1977}. Moreover, the dual of a GRS is another GRS. In the above case, the dual of $\mathrm{GRS}_{\deg g}(S,g)$ is given by $\mathrm{GRS}_{n-\deg g}(S, y)$ where for $i = 1,\ldots, n$ $$y_i = \frac{g(s_i)}{\prod_{j\neq i}(s_i - s_j)}.$$ The generator matrix $G$ is a parity-check matrix for $\mathrm{GRS}_{n-\deg g}(S,y)$ so naturally $\Gamma(S,g,\Fqr)$ is a subfield subcode of $\mathrm{GRS}_{n-\deg g}(S,y)$. We note that the dual of a GRS code via a Goppa code is not necessarily a GRS code via a Goppa code. 

\subsection{Multivariate Goppa codes}\label{sec:multivariate.goppa}
The construction of Goppa codes was extended to the multivariate case in \cite{Lopez_2023}. We recall here their construction and offer an alternative, shorter proof of the parameters of the so-called multivariate Goppa codes. The approach we take here will then be of inspiration for the study of the skew version of multivariate Goppa codes we shall perform in \Cref{sec:skew.multivariate.goppa}.

\begin{definition}\label{def:multgoppa}
Fix nonempty subsets $S_1,\ldots, S_m \subseteq \Fqt$. Let $\mathcal{S} := S_1 \times \cdots \times S_m \subseteq \Fqt^m.$ Enumerate the elements of $\mathcal{S} = \{\bm{s}_1,\ldots, \bm{s}_n\}.$ Choose $g \in \Fqt[\bm{x}] = \Fqt[x_1,\ldots, x_m]$ such that $g(\bm{s}_i) \neq 0$ for all $i$. Further, we assume $g = g_1\cdots g_m$ where $g_i \in \Fqt[x_i]$ and define $\deg g = \prod_{i=1}^m \deg g_i$. Let $r\mid t$ be an integer. The multivariate Goppa code is defined as \begin{equation}\label{eq:multGoppa}
    \Gamma(\mathcal{S}, g,\Fqr):= \left\{(c_1,\ldots, c_n) \in \Fqr^n\, : \, \sum_{i=1}^n \frac{c_i}{\prod_{j= 1}^m (x_j-s_{ij})}\equiv 0 \mod g\right\},
\end{equation} where $\bm{s}_i = (s_{i1},\ldots, s_{im}) \in \mathcal{S}$. 
\end{definition}

We will often denote $\Gamma(\mathcal{S}, g, \Fqr)$ by $\Gamma(\mathcal{S},g)$ when the subfield is understood from the context. Notice, when $m = 1$, $\Gamma(\mathcal{S},g)$ reduces to a classical Goppa code. In general, it was shown in \cite[Thm.~8]{Lopez_2023} that the generator matrix for $$\bigotimes_{i = 1}^m \mathrm{GRS}_{\deg g_i}(S_i, g_i)=:T(\mathcal{S}, g)$$ is a parity-check matrix for $\Gamma(\mathcal{S}, g)$. In other words, $\Gamma(\mathcal{S}, g)$ is a subfield subcode of the dual of the code denoted $T(\mathcal{S}, g)$, that is, $\Gamma(\mathcal{S}, g)=(T(\mathcal{S}, g)^\perp\cap \Fqr^n)$. We point out that the dual of $T(\mathcal{S},g)$ can be expressed as follows
\begin{align*}
    (T(\mathcal{S},g))^\perp &= \left(\bigotimes_{i = 1}^m \mathrm{GRS}_{\deg g_i}(S_i, g_i)\right)^\perp\\
    &=\sum_{i = 1}^m \Fqt^{n_1} \otimes \cdots \otimes \mathrm{GRS}_{\deg g_i}(S_i, g_i)^\perp \otimes \cdots \otimes \Fqt^{n_m}\\
    &=\sum_{i = 1}^m \Fqt^{n_1} \otimes \cdots \otimes \mathrm{GRS}_{n_i-\deg g_i}(S_i, y_i) \otimes \cdots \otimes \Fqt^{n_m},
\end{align*}
where $n_i = |S_i|$. The chain of equalities follows from observing and inducting on the fact that $(C_1 \otimes C_2)^\perp = C_1^\perp \otimes \Fqt^{n_2} + \Fqt^{n_1} \otimes C_2^\perp$ and then applying the fact that the dual of a GRS code is another GRS code.

We know $\Gamma(\mathcal{S},g)$ is the subfield subcode of $(T(\mathcal{S},g))^\perp$ so we obtain the following result, which matches \cite[Cor.~15]{Lopez_2023}.

\begin{theorem}\label{thm:multivariate.goppa.code.parameters}
    The multivariate Goppa code $\Gamma(\mathcal{S},g)$ has parameters:
    \begin{itemize}
        \item Length $n = |\mathcal{S}|$,
        \item Dimension $k$ satisfying $n - \frac{t}{r}\deg g\leq k \leq n -\deg g$,
        \item Minimum distance $d \geq \min_{i\in [m]}\{\deg g_i + 1\}$.
    \end{itemize}
\end{theorem}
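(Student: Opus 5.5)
We will use the description $\Gamma(\mathcal{S},g)=T(\mathcal{S},g)^\perp\cap\Fqr^n$ recalled above, where $T(\mathcal{S},g)=\bigotimes_{i=1}^m \mathrm{GRS}_{\deg g_i}(S_i,g_i)$. Throughout we assume $\deg g_i\le n_i=|S_i|$, as is implicit in the definition. The length statement is immediate, since $\Gamma(\mathcal{S},g)\subseteq \Fqr^n$ with $n=|\mathcal{S}|=\prod_i n_i$.

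For the dimension, we first compute $\dim_{\Fqt} T(\mathcal{S},g)$. Each factor $\mathrm{GRS}_{\deg g_i}(S_i,g_i)$ has dimension $\deg g_i$, so the tensor product has dimension $\prod_i \deg g_i=\deg g$. Hence $\dim_{\Fqt} T(\mathcal{S},g)^\perp = n-\deg g$.

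\emph{Upper bound.} A subfield subcode has $\Fqr$-dimension at most the $\Fqt$-dimension of the ambient code. Indeed, $\Fqr$-linearly independent vectors of $\Fqr^n$ remain $\Fqt$-linearly independent. This gives $k\le n-\deg g$.

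\emph{Lower bound.} We apply Delsarte's theorem (\Cref{th:delsarte}):
\[
\Gamma(\mathcal{S},g)^\perp=\Tr\bigl(T(\mathcal{S},g)\bigr).
\]
The trace map is $\Fqr$-linear, and $T(\mathcal{S},g)$ has $\Fqr$-dimension $\frac{t}{r}\deg g$. Therefore $\dim_{\Fqr}\Tr(T(\mathcal{S},g))\le \frac{t}{r}\deg g$, which gives $k\ge n-\frac{t}{r}\deg g$.

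For the minimum distance, it suffices to bound $d(T(\mathcal{S},g)^\perp)$, because a subfield subcode has minimum distance at least that of the ambient code. Write $C_i=\mathrm{GRS}_{n_i-\deg g_i}(S_i,y_i)=\mathrm{GRS}_{\deg g_i}(S_i,g_i)^\perp$. By the expression of the dual recalled above,
\[
T(\mathcal{S},g)^\perp=\sum_{i=1}^m \Fqt^{n_1}\otimes\cdots\otimes C_i\otimes\cdots\otimes\Fqt^{n_m}.
\]
We argue by induction on $m$ using \Cref{lem:distance.dual.tensor.product}. The statement to carry through the induction is: for codes $D_i$, the code $(D_1^\perp\otimes\cdots\otimes D_m^\perp)^\perp$ has minimum distance $\min_i d(D_i)$.

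\emph{Base case.} For $m=1$ this is trivial.

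\emph{Inductive step.} Set $D'=(D_1^\perp\otimes\cdots\otimes D_{m-1}^\perp)^\perp$. Then $D'^\perp=D_1^\perp\otimes\cdots\otimes D_{m-1}^\perp$, so
\[
(D_1^\perp\otimes\cdots\otimes D_m^\perp)^\perp=(D'^\perp\otimes D_m^\perp)^\perp.
\]
The lemma gives that this code has distance $\min(d(D'),d(D_m))$, and the induction hypothesis gives $d(D')=\min_{i<m}d(D_i)$.

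\emph{Conclusion.} We apply this with $D_i=C_i$. Each $C_i$ is an MDS code of length $n_i$ and dimension $n_i-\deg g_i$, so $d(C_i)=n_i+1-(n_i-\deg g_i)=\deg g_i+1$. This yields $d\ge\min_i\{\deg g_i+1\}$.

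The only mildly delicate point is the bookkeeping in the induction, namely verifying that the iterated dual really has the form $(D'^\perp\otimes D_m^\perp)^\perp$ so that \Cref{lem:distance.dual.tensor.product} applies at each step. The rest consists of standard subfield subcode facts.
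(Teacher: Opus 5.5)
Your proof is correct and follows the paper's argument essentially step for step. The upper bound comes from $\Gamma(\mathcal{S},g)$ being a subfield subcode of $T(\mathcal{S},g)^\perp$, the lower bound from Delsarte's theorem together with the $\Fqr$-dimension $\frac{t}{r}\deg g$ of $T(\mathcal{S},g)$, and the distance from \Cref{lem:distance.dual.tensor.product}; your explicit induction extending that two-factor lemma to $m$ factors, and the stated assumption $\deg g_i\le n_i$, only fill in bookkeeping the paper leaves implicit.
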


\begin{proof}
    The length is obvious. Let $k = \dim \Gamma(\mathcal{S}, g)$. Since $T(\mathcal{S},g)$ has dimension $\deg g$, its dual $(T(\mathcal{S},g))^\perp$ has dimension $n - \deg g$ and $\Gamma(\mathcal{S},g)$ is a subfield subcode of $(T(\mathcal{S},g))^\perp$ so $\dim \Gamma(\mathcal{S},g) \leq n - \deg g$. By Delsarte's Theorem (\Cref{th:delsarte}), we have that for a code $C$ over $\Fqt$, \begin{equation*}
        \left(C \cap \Fqr^n\right)^\perp = \Tr(C^\perp),
    \end{equation*} and $\Tr: C \to \Tr(C)$ is a surjective $\Fqr$ linear mapping so the dimension of $C$ regarded as a $\Fqr$ vector space is $\frac{t}{r}\cdot \dim C$. Applying this result to \[\Gamma(\mathcal{S},g) = (\Tr(T(\mathcal{S},g)))^\perp\] we have \[\dim \Tr(T(\mathcal{S},g)) \leq \frac{t}{r}\deg g\] from which we obtain  \[\dim \Gamma(\mathcal{S},g) \geq n - \frac{t}{r}\deg g.\]
    Finally, by \Cref{lem:distance.dual.tensor.product}, the minimum distance of $(T(\mathcal{S},g))^\perp$ is $\min_{j\in [m]}\{\deg g_j + 1\}$ and the bound on the minimum distance of $\Gamma(\mathcal{S},g)$ follows.
\end{proof}

In \cite{Lopez_2023}, the authors introduced Augmented Cartesian codes to prove the minimum distance of $\Gamma(\mathcal{S},g)$. We showed that our proof of dimension and distance of $\Gamma(\mathcal{S},g)$ follows directly from the tensor product of GRS codes characterization and manipulating the dual of tensor product codes. We now show how our method leads to an alternative distance proof for Augmented Cartesian codes. We recall that for $g$ and $S$ as above, Augmented Cartesian codes are defined as
\begin{equation*}
ACar(\mathcal{S}, g)=\left\{\left(\frac{g}{L}(s_i)f(s_i)\right)_i \,:\,  s_i\in S, f\in L(A_g)\right\},
\end{equation*}
where $A_g=\prod_{j=1}^m \{0,\dots,n_j-1\}\setminus\prod_{j=1}^m \{n_j-\deg_{x_j}(g),\dots,n_j-1\}$ and $L(A_g)=\mathrm{Span}_{\Fqt} \{\bm{x}^a\,:\, a\in A_g\}$. Note that by \cite[Thm.~12]{Lopez_2023}, we have $$ACar(\mathcal{S},g) = (T(\mathcal{S},g))^\perp,$$ hence $\Gamma(\mathcal{S}, g)$ is the subfield subcode of $ACar(\mathcal{S}, g)$, that is, $\Gamma(\mathcal{S}, g)=ACar(\mathcal{S},g)\cap \Fqr^n$. Since \[ACar(\mathcal{S}, g) = \sum_{j = 1}^m \Fqt^{n_1} \otimes \cdots \otimes \mathrm{GRS}_{n_j-\deg g_j}(S_j, y_j) \otimes \cdots \otimes \Fqt^{n_m},\] it follows from \Cref{lem:distance.dual.tensor.product} that the minimum distance is $\min_{j\in [m]}\{\deg g_j + 1\}$. Indeed, this gives rise to a proof of the minimum distance of $ACar(\mathcal{S}, g)$, which differs from the one presented in \cite[Lem.~6]{Lopez_2023}.

\section{Generalized Skew Goppa Codes}\label{sec:skewgoppa}
In this section, we transpose the construction of Goppa codes to the skew case, following the construction of Generalized Skew Goppa (GSG) codes from \cite{gomez-torrecillas2023}. In particular, in \Cref{sec:skewgoppa.parity.check}, we give a new description of the parity-check matrix of GSG codes, which will allow us to show in \Cref{sec:GSG.inside.GSRS} that GSG codes are subfield subcodes of the dual of so-called Generalized Skew Reed--Solomon codes (introduced in \cite{BU14} and recalled later in this section). We shall use this result in \Cref{sec:skew.multivariate.goppa} to study the multivariate version of GSG codes.
\smallskip

Let $\theta$ be the Frobenius automorphism for the finite field $\Fqt$ and let $R = \Fqt[X;\theta]$. This is an Ore polynomial ring in one variable $X$ over the finite field $\Fqt$. On this ring, we have the usual sum while the multiplication is twisted by the rule $X\cdot a =\theta(a) \cdot X$ for all $a\in\Fqt$. It is well-known that $R$ is a left and right Euclidean domain, and has left and right divisors and multiples. We refer to \cite{Ore33} for the theory of Ore polynomial rings.

Distinctness and linear independence do not carry over in the noncommutative setting of Ore polynomial rings. For an equivalent notion, the concept of P-independence was introduced. In this section, we present the results necessary for this work, and refer the reader to \cite{Lam_1986, Lam1988, LAM200443, delenclosleroy2007, gomez-torrecillas2023} for a more comprehensive treatment.

\begin{definition}
    A set of $n$ points $S = \{s_1,\ldots, s_n\} \subseteq \Fqt$ is said to be P-independent if the degree of the least common left multiple of $\{X - s_i\mid 1 \leq i \leq n\}$ is $n$.
\end{definition}

A polynomial $0 \neq g \in R$ is called \emph{invariant} if $gR = Rg$. If $g\in R$ is an invariant polynomial, then $R/Rg$ is a ring.

Suppose $gh = h'g$ for some $h,h' \in R$. If $fh = g$ then $gh = h'fh$ so $g = h'f$. On the other hand, if $g = h'f$ then $h'fh = h'g$ so $fh = g$. This shows that $f\mid_r g$ if, and only if, $f \mid_l g$, where $\mid_r$ and $\mid_l$ denote the right and left division, respectively. It follows that, given the monomial $X - s$ for $s \in \Fqt$, $(X-s, g)_r = 1$ if, and only if, $(X-s, g)_l = 1$, where $(\cdot,\cdot)_r$ and $(\cdot,\cdot)_l$ denote the right and left gcd, respectively.

If $(X-s,g)_r = 1$, then there is some $h\in R$ such that $h(X-s) -1 \in Rg = gR$. In $R/Rg$, $h(X-s) - 1 = 0$, hence $X - s + Rg$ is a unit in $R/Rg$. This implies that $h$ is unique, with $\deg h < \deg g$, and such that $h(X-s) -1 \in Rg$.

We are now ready to define Generalized Skew Goppa codes \cite[Def.~1]{gomez-torrecillas2023}.
\begin{definition}\label{def:skew.goppa.code}
Let $0 \neq g \in R$ be invariant. Let $S=\{s_1,\ldots,s_n\} \subseteq \Fqt$ be P-independent elements such that $(X-s_j, g)_r = 1$ for $j = 1,\ldots,n$. Let $h_j \in R$ be the unique element such that $\deg h_j < \deg g$ and $h_j(X-s_j) - 1 \in Rg$. Let $\eta=\{\eta_1,\ldots,\eta_n\} \subseteq \Fqt^*$. Let $r\mid t$ be an integer. The Generalized Skew Goppa (GSG) code $\tilde\Gamma\subseteq \Fqr^n$ is 
\begin{equation*}
\begin{aligned}
        \tilde\Gamma(S, \eta, g, \Fqr) &\coloneqq \left\{c=(c_1,\ldots, c_n)\in \Fqr^n \, : \, \sum_{j=1}^n c_j\eta_jh_j = 0\right\}\\
        &=\left\{c=(c_1,\ldots, c_n)\in \Fqr^n \, : \, \sum_{j=1}^n c_j\eta_jh_j \in Rg\right\}.
\end{aligned}
    \end{equation*}
\end{definition}
If $\eta_j = 1$ for $j = 1,\ldots, n$, then we call $\tilde\Gamma$ a Skew Goppa code. It is isomorphic to the linearized Goppa codes introduced in \cite{wang18}. Additionally, if $\theta = \mathrm{id}$, $\tilde\Gamma$ recovers the classical Goppa code.

We now proceed to a study of GSG codes and their parity-check matrix. From \cite[Thm.~2]{wang18}, we know that linearized Goppa codes have dimension $k\geq n - \frac{t}{r}\deg g$ and minimum distance $d\geq \deg g + 1$. A decoding algorithm for GSG codes was provided in \cite{gomez-torrecillas2023}, from which one can deduce that their minimum distance is at least $\deg g+1$. In our coming study, we will have a slightly different approach than the aforementioned papers. In particular, we will make explicit the form of invariant polynomials in $R$, and successively rewrite the parity-check matrix of GSG codes in a way which will be convenient for us, as explained at the beginning of this section. In order to do so, we first need to recall some results on partial norms and P-independent sets.

\subsection{Norms and P-independent sets}\label{sec:norms}
To proceed in analyzing the parity-check matrix of the GSG code, the following results about partial norms will be useful. For $i \in \mathbb{N}$, let $N_i: \Fqt \to \Fqt$ be defined as $N_0(a) = 1$ and $N_i(a) = \prod_{s=0}^{i-1}\theta^s(a)$ for $i > 0$. One can easily check that $N_{i+1}(a) = \theta(N_{i}(a))\cdot a = \theta^i(a)N_i(a)$. Note that $N_t$ is the classical field norm and if $\theta = \mathrm{id}$ then $N_i(a) = a^i$.

For any ratio of norms we have the identity \begin{equation}\label{eqn:ratio.of.norms}\frac{N_{i+1}(a)}{N_{j+1}(a)} = \theta\left(\frac{N_{i}(a)}{N_{j}(a)}\right)\end{equation} for all $0 \leq i,j \leq t$ and all $a \neq 0$. Additionally, \begin{equation}\label{eqn:inverse.of.norm.equals.norm.of.inverse}\frac{1}{N_i(a)} = \frac{1}{\prod_{s=0}^{i-1}\theta^s(a)} = \prod_{s=0}^{i-1}\frac{1}{\theta^s(a)} = \prod_{s=0}^{i-1}\theta^s(a^{-1}) = N_i(a^{-1}).\end{equation}
These norms often show up in the context of polynomial evaluation and, consequently, Vandermonde matrices (see \eg \cite{BU14}).
Consider a skew polynomial $g(X) = g_\rho X^\rho + g_{\rho-1}X^{\rho-1} + \cdots + g_0$, then the evaluation of $g$ at a point $a$ is well known \cite[Lem. 2.4]{LAM1988308} to be \[g(a) = \sum_{i=0}^\rho g_i N_i(a).\]
We present some easy lemmas which will be used to prove the rank of the parity-check matrix of $\tilde\Gamma$. \begin{lemma}\label{lem:theta.rank.preserved}
        Let $M$ be an $m \times n$ matrix over $\Fqt$ and let $\theta$ be the $q$-Frobenius automorphism. Then, $\rk (M) = \rk (\theta(M))$ where $\theta(M)$ is the matrix obtained by applying $\theta$ to every entry of $M$.
    \end{lemma}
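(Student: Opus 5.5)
The plan is to use that $\theta$ is a field automorphism of $\Fqt$, so it commutes with the operations that compute rank. I would characterize $\rk(M)$ as the largest $k$ such that some $k\times k$ minor of $M$ is nonzero. Here a minor is a polynomial expression in the entries of $M$ with integer coefficients: the Leibniz sum of signed products.

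First, for any square submatrix $M'$ of $M$, note that $\theta(M')$ is the corresponding submatrix of $\theta(M)$. Since $\theta$ is a ring homomorphism fixing $\mathbb{Z}$ (indeed fixing $\Fq$), we get
\[
\det(\theta(M')) = \sum_{\sigma}\mathrm{sgn}(\sigma)\prod_i \theta(M'_{i\sigma(i)}) = \theta\Big(\sum_{\sigma}\mathrm{sgn}(\sigma)\prod_i M'_{i\sigma(i)}\Big) = \theta(\det M').
\]
Second, $\theta$ is injective, so $\theta(\det M')\neq 0$ if and only if $\det M'\neq 0$. Therefore the nonzero minors of $M$ and of $\theta(M)$ occur in exactly the same positions and sizes, and $\rk(M)=\rk(\theta(M))$.

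An alternative route, which I would mention as a check, uses row reduction. Applying $\theta$ entrywise to a linear dependence $\sum_i \lambda_i M_i = 0$ among rows gives $\sum_i \theta(\lambda_i)\theta(M_i)=0$. Because $\theta$ is bijective, this gives a bijection between dependencies among the rows of $M$ and dependencies among the rows of $\theta(M)$, so maximal independent sets of rows correspond.

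There is no real obstacle here. The only point to state explicitly is that $\theta$ is a bijective ring automorphism, namely the Frobenius of a finite field. This is what guarantees both that $\theta$ commutes with determinants and that nonzero elements stay nonzero.
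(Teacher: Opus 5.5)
Your proof is correct. Your primary argument, via the determinantal characterization of rank and the identity $\det(\theta(M'))=\theta(\det M')$, takes a different route from the paper. The paper's proof is essentially your ``alternative route'' applied to columns rather than rows: it pushes a linear relation $\sum_j \beta_j v_j=0$ among columns through $\theta$, and notes that $\beta_j\neq 0$ if and only if $\theta(\beta_j)\neq 0$, so dependent sets of columns correspond.

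The minors argument uses only that $\theta$ is an injective ring homomorphism. It would therefore work verbatim for any field embedding, with the rank of the image computed over the larger field. The price is invoking the standard fact that rank equals the size of the largest nonvanishing minor.

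The linear-relations argument is more elementary, but it genuinely needs surjectivity of $\theta$ for the converse direction. To pull a relation among the $\theta(v_j)$ back to one among the $v_j$, you must write each coefficient as some $\theta(\beta_j)$. The paper leaves this direction implicit, whereas you state bijectivity explicitly, which is the cleaner way to present it.
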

    \begin{proof}
        Let $v_1,\ldots, v_n$ be the columns of $M$. Then $\theta(v_1),\ldots, \theta(v_n)$ are the columns of $\theta(M)$ where $\theta(v_j)$ is the vector obtained by applying $\theta$ to each entry of $v_j$ for $j = 1,\ldots, n$. Suppose there is an $\Fqt$-linear relation among the $v_j$'s: \[\sum_{j=1}^n \beta_j v_j = 0, \quad \beta_j \in \Fqt.\] Applying $\theta$ componentwise leads to \[0 = \theta(0) = \theta\left(\sum_{j=1}^n \beta_j v_j\right) = \sum_{j=1}^n \theta(\beta_j)\theta(v_j).\] Since $\beta_j \neq 0 \iff \theta(\beta_j) \neq 0$, we get the $v_j$s are dependent if and only if $\theta(v_j)$s are dependent. Hence, the maximum number of independent columns is preserved so $\rk(M) = \rk(\theta(M))$.
    \end{proof}

We recall the following two results from \cite[Prop.~8 and 9]{gomez-torrecillas2023}, which we state without proof. For $a\in \Fqt,b\in\Fqt^*$, we write $^ba=\theta(b)ab^{-1}$ and denote by $[a]_\theta=\{^ba\mid b\in\Fqt^*\}$ the conjugacy class of $a$ under $\theta$. The first lemma relates the notion of P-independent subsets of a conjugacy class to the linear independence of the element by which we are conjugating.
\begin{lemma}\label{lem:p_independence_conjugacy}
        Given $a \in \Fqt^*$, the P-independent subsets of $[a]_\theta$ are those of the form $\{\prescript{b_1}{}{a},\ldots, \prescript{b_m}{}{a}\}$ where $m \leq t$ and $\{b_1,\ldots, b_m\}$ is an $\Fq$-independent set. Additionally, $m = t$ if, and only if, the least common left multiple of $\{X-\prescript{b_i}{}{a}\mid 1 \leq i \leq m\}$ is $X^t - N_t(a)$.
    \end{lemma}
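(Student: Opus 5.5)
The approach is to translate root-finding on the conjugacy class $[a]_\theta$ into root-finding for a $q$-linearized polynomial, whose root sets are $\Fq$-subspaces of $\Fqt$. Throughout I use the standard facts that $(X-s)\mid_r f$ if and only if $f(s)=0$, and that the least common left multiple of $\{X-s_i\}$ is the monic generator of the left ideal $\{f\in R : f(s_i)=0 \ \forall i\}$. Hence its degree is the minimal degree of a nonzero polynomial vanishing at all the $s_i$.

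\emph{Step 1 (evaluation on the class).} For $b\in\Fqt^*$ one checks by induction that $N_i(\prescript{b}{}{a}) = \theta^i(b)N_i(a)b^{-1}$, since the product telescopes. For $f=\sum_i f_iX^i\in R$ the evaluation formula $f(s)=\sum_i f_iN_i(s)$ then gives
\[
f(\prescript{b}{}{a}) = b^{-1}\sum_i f_iN_i(a)\,\theta^i(b) = b^{-1}L_f(b),
\qquad L_f(Y):=\sum_i f_iN_i(a)\,Y^{q^i}.
\]
Thus $f$ vanishes at $\prescript{b}{}{a}$ exactly when $b$ is a root of the $q$-linearized polynomial $L_f$. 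The roots of $L_f$ in $\Fqt$ form an $\Fq$-subspace. If $f\neq 0$ then $L_f\neq 0$, because $N_i(a)\neq0$; so if $\deg f=d$, this subspace has at most $q^d$ elements and hence dimension at most $d$. The assignment $f\mapsto L_f$ is also invertible: any linearized polynomial $\sum_i \ell_iY^{q^i}$ arises as $L_f$ from $f_i=\ell_iN_i(a)^{-1}$.

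\emph{Step 2 (degree of the lclm).} Let $V=\mathrm{Span}_{\Fq}\{b_1,\dots,b_m\}$. Any $f$ vanishing at all $\prescript{b_i}{}{a}$ has $V\subseteq\ker L_f$, so by Step 1 we get $\deg f\geq\dim V$. Conversely, the subspace polynomial $\prod_{v\in V}(Y-v)$ is $q$-linearized of $q$-degree $\dim V$. By Step 1 it equals $L_f$ for some $f$ of degree $\dim V$, and this $f$ vanishes at every $\prescript{b_i}{}{a}$. Therefore
\[
\deg \mathrm{lclm}\{X-\prescript{b_i}{}{a}\} = \dim_{\Fq} V.
\]
Every element of $[a]_\theta$ has the form $\prescript{b}{}{a}$. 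Moreover $\prescript{b}{}{a}=\prescript{b'}{}{a}$ iff $\theta(b/b')=b/b'$, i.e.\ iff $b/b'\in\Fq^*$, so distinct points correspond to non-proportional $b_i$. It follows that a subset of $m$ points is P-independent iff $\dim V=m$, i.e.\ iff $b_1,\dots,b_m$ are $\Fq$-independent. This forces $m\le t$.

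\emph{Step 3 (the case $m=t$).} If $m=t$ then $V=\Fqt$, and the subspace polynomial is $Y^{q^t}-Y$. The corresponding $f$ has $f_tN_t(a)=1$ and $f_0=-1$, so after making it monic the lclm is $X^t-N_t(a)$. Conversely, if the lclm equals $X^t-N_t(a)$, then it has degree $t$, and Step 2 gives $m=\dim V=t$.

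The only delicate point is the norm identity $N_i(\prescript{b}{}{a})=\theta^i(b)N_i(a)b^{-1}$ together with the remainder-theorem characterization of the lclm. Once these are in hand, the rest is the standard count of roots of linearized polynomials.
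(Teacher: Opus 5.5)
Your proof is correct and complete. The paper gives no proof of this lemma: it is quoted without proof from \cite[Prop.~8 and 9]{gomez-torrecillas2023}. The standard route to it in the literature goes through the Lam--Leroy theory of conjugacy classes in skew polynomial rings. There, P-independence of $\{\prescript{b_1}{}{a},\ldots,\prescript{b_m}{}{a}\}$ is shown to be equivalent to linear independence of the $b_i$ over the $\theta$-centralizer $\{x : \theta(x)a = ax\}$, which for $a\neq 0$ in the commutative field $\Fqt$ is just $\Fq$.

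Your identity $f(\prescript{b}{}{a}) = b^{-1}L_f(b)$ is the concrete form of that theory: $L_f(b)$ is $f$ evaluated at the $\theta$-semilinear map $T_a(x)=\theta(x)a$, since $T_a^i(x)=\theta^i(x)N_i(a)$. By switching to $q$-linearized polynomials and subspace polynomials, you get an elementary, self-contained argument that also gives slightly more. You obtain $\deg \mathrm{lclm}\{X-\prescript{b_i}{}{a}\} = \dim_{\Fq}\mathrm{Span}_{\Fq}\{b_1,\ldots,b_m\}$ for any finite subset of the class, not only P-independent ones. The lclm is given explicitly as the skew polynomial attached to the subspace polynomial of the span, so the case $m=t$ (where $Y^{q^t}-Y$ corresponds to $X^t-N_t(a)$) is immediate. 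The general theory, in exchange, works over arbitrary division rings with an endomorphism and a derivation.

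You also read the ``additionally'' clause correctly as a statement about P-independent sets. The converse direction genuinely needs $m=\dim V$ from Step~2: any $m>t$ points of $[a]_\theta$ whose conjugators span $\Fqt$ also have lclm $X^t-N_t(a)$.
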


    The next lemma describes how P-independent sets can be partitioned by norms.

    \begin{lemma}
        A subset $A \subseteq \Fqt^*$ is P-independent if, and only if, $A = A_1 \cup \cdots \cup A_s$, where $A_i \subseteq [a_i]_\theta$ is P-independent for all $i = 1,\ldots, s$, and $a_1,\ldots, a_s \in \Fqt^*$ have distinct norms.
    \end{lemma}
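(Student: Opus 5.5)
The plan is to reduce everything to three ingredients: the classification of $\theta$-conjugacy classes by the norm $N_t$, the fact that each class comes with a \emph{central} polynomial annihilating it, and a Chinese-remainder-type count of the degree of a least common left multiple.

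\textbf{Step 1 (classes are determined by norms).} Since $N_t$ is multiplicative and $\theta$-invariant, $N_t(\prescript{b}{}{a})=N_t(\theta(b))N_t(a)N_t(b)^{-1}=N_t(a)$. Conversely, suppose $N_t(c)=N_t(a)$. Then $N_t(ca^{-1})=1$, so Hilbert's Theorem 90 gives $b\in\Fqt^*$ with $ca^{-1}=\theta(b)b^{-1}$, that is, $c=\prescript{b}{}{a}$. Hence for $a,c\in\Fqt^*$ we have $[a]_\theta=[c]_\theta$ if and only if $N_t(a)=N_t(c)$.

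\textbf{Step 2 (the easy direction).} Any subset of a P-independent set is P-independent. Indeed, if $B\subseteq A$, then the lclm of $\{X-s : s\in A\}$ is the lclm of the lclm of $B$ together with the remaining $|A\setminus B|$ linear factors. Its degree is therefore at most $\deg \mathrm{lclm}(B)+|A\setminus B|$, which forces $\deg\mathrm{lclm}(B)=|B|$. So if $A$ is P-independent, we partition $A$ according to conjugacy classes, $A_i=A\cap[a_i]_\theta$. Each $A_i$ is P-independent, and by Step 1 the representatives $a_i$ have distinct norms.

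\textbf{Step 3 (the converse).} Suppose each $A_i\subseteq[a_i]_\theta$ is P-independent and the norms $\nu_i:=N_t(a_i)\in\Fq^*$ are pairwise distinct. Set $c_i=X^t-\nu_i$.
\begin{itemize}
\item Each $c_i$ is central in $R$: we have $\theta^t=\mathrm{id}$, so $X^t$ is central, and $\nu_i\in\Fq$ is fixed by $\theta$.
\item Each $X-s$ with $s\in[a_i]_\theta$ right-divides $c_i$, because the remainder equals the evaluation $N_t(s)-\nu_i=0$. Hence $L_i:=\mathrm{lclm}(A_i)$ satisfies $Rc_i\subseteq RL_i$, and $\deg L_i=|A_i|$.
\item The $c_i$ are pairwise coprime in the commutative ring $\Fq[X^t]$, since the $\nu_i$ are distinct. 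So there are $u,v\in\Fq[X^t]$ with $uc_i+vc_j=1$, which makes the two-sided ideals $Rc_i$ and $Rc_j$ comaximal.
\end{itemize}
The lclm $L$ of the whole of $A$ satisfies $RL=\bigcap_i RL_i$. Its degree equals $\dim_{\Fqt}R/RL$, where $R/RL$ is a left $R$-module. Consider the natural map $R\to\prod_i R/RL_i$. Its kernel is $\bigcap_i RL_i$, and I claim it is surjective. Using comaximality of the central ideals, choose $e_i\in\Fq[X^t]$ with $e_i\equiv 1 \pmod{c_i}$ and $e_i\equiv 0 \pmod{c_j}$ for $j\neq i$ (classical CRT in $\Fq[X^t]$). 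Given targets $r_i$, the element $\sum_i r_ie_i$ maps to $(r_i)_i$, because $r_ie_j\in Rc_j\subseteq RL_j$ for $j\ne i$. Therefore $\deg L=\sum_i\deg L_i=\sum_i|A_i|=|A|$, so $A$ is P-independent.

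I expect the main obstacle to be Step 3, specifically justifying that the CRT idempotents work for \emph{left} ideals in the noncommutative ring. This hinges on $e_i$ being central, so that $r_ie_j$ lies in $Rc_j$. That is exactly why the argument routes through the central polynomials $X^t-\nu_i$ rather than the $L_i$ themselves.
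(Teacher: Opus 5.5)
Your proof is correct. The paper gives no argument of its own here: the lemma is quoted without proof from \cite[Prop.~9]{gomez-torrecillas2023}. There it comes from the general theory of P-independence over division rings (Lam, Lam--Leroy), where a set is P-independent if and only if its intersection with each conjugacy class is. That fact is then combined with the identification of $\theta$-conjugacy classes in $\Fqt^*$ with the fibres of the norm.

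Your route is self-contained and specific to $\Fqt[X;\theta]$. Hilbert~90 gives the identification of classes with norm fibres. Every element of $[a_i]_\theta$ is a right root of the central polynomial $X^t-N_t(a_i)$. Chinese remaindering inside the commutative subring $\Fq[X^t]$ then splits $R/RL$ as $\prod_i R/RL_i$, so the degrees of the lclm's add. This avoids the root-set machinery of the general theory. The price is generality: you need each class to be annihilated by a central polynomial, which is special to a finite field with an automorphism of finite order.

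There are a few small slips in Step~3, none affecting validity:
\begin{itemize}
\item The membership justifying surjectivity should read $r_ie_i\in Rc_j\subseteq RL_j$ for $j\neq i$ (you wrote $r_ie_j$). You also need $r_je_j-r_j\in Rc_j\subseteq RL_j$ for the $j$-th component.
\item Centrality is not what puts $r_ie_i$ into $Rc_j$. That is automatic, because $e_i\in \Fq[X^t]\,c_j\subseteq Rc_j$ and $Rc_j$ is a left ideal. What you actually use is that $\Fq[X^t]$ is a commutative subring in which the $c_i$ are pairwise coprime, so the elements $e_i$ exist by classical CRT.
\item The equality $\sum_i|A_i|=|A|$ uses that the $A_i$ are pairwise disjoint. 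This holds because distinct norms force distinct, hence disjoint, conjugacy classes (the easy half of your Step~1), but it is worth stating.
\end{itemize}
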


    This leads to the following sufficient condition for building P-independent sets which remain P-independent under inversion.
    \begin{lemma}\label{lem:inversion.preserves.p.independence}
        Suppose $S = \{s_1,\ldots, s_n\} \subseteq \Fqt^*$ is P-independent and at most two elements from each $\theta$-conjugacy class belong to $S$. Then $S^{-1} = \{s_1^{-1},\ldots, s_n^{-1}\}$ is P-independent.
    \end{lemma}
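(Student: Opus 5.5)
We will use the characterization of P-independent sets in the preceding lemma: $S$ is P-independent if and only if it splits as $S = A_1\cup\cdots\cup A_s$ with $A_i\subseteq[a_i]_\theta$ P-independent and $N_t(a_1),\ldots,N_t(a_s)$ pairwise distinct. The plan is to show that inversion sends conjugacy classes to conjugacy classes, keeps norms distinct, and keeps each piece $A_i^{-1}$ P-independent. Applying the same lemma in the reverse direction then gives the P-independence of $S^{-1}$.

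\emph{Step 1: conjugacy classes and norms.} For $b\in\Fqt^*$ we have
\[
\left(\prescript{b}{}{a}\right)^{-1}=\left(\theta(b)ab^{-1}\right)^{-1}=b\,a^{-1}\theta(b)^{-1}=\theta(b^{-1})\,a^{-1}\,(b^{-1})^{-1}=\prescript{b^{-1}}{}{(a^{-1})}.
\]
Hence $[a]_\theta^{-1}=[a^{-1}]_\theta$, and each $A_i^{-1}$ lies in the single class $[a_i^{-1}]_\theta$. Since $N_t$ is multiplicative, \eqref{eqn:inverse.of.norm.equals.norm.of.inverse} gives $N_t(a_i^{-1})=N_t(a_i)^{-1}$. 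Distinct norms therefore remain distinct after inversion, and the blocks $A_i^{-1}$ still lie in classes with pairwise distinct norms.

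\emph{Step 2: each block stays P-independent.} This is the only step where the hypothesis ``at most two elements per class'' enters, and I expect it to be the main (though mild) obstacle. We argue by the size of $A_i$.

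If $|A_i|=1$, then $A_i^{-1}$ is a single point and is trivially P-independent.

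If $|A_i|=2$, write $A_i=\{\prescript{b_1}{}{a},\prescript{b_2}{}{a}\}$ with $b_1,b_2$ $\Fq$-independent, by \Cref{lem:p_independence_conjugacy}. By Step 1, $A_i^{-1}=\{\prescript{b_1^{-1}}{}{(a^{-1})},\prescript{b_2^{-1}}{}{(a^{-1})}\}$. Applying \Cref{lem:p_independence_conjugacy} again, it suffices to show that $b_1^{-1},b_2^{-1}$ are $\Fq$-independent. For two nonzero elements, $\Fq$-dependence means $b_1^{-1}=\lambda b_2^{-1}$ for some $\lambda\in\Fq^*$. That is equivalent to $b_1=\lambda^{-1}b_2$, i.e.\ to $b_1,b_2$ being $\Fq$-dependent, which is excluded. (Alternatively, two points are P-independent exactly when they are distinct, and inversion preserves distinctness.)

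This also explains why the hypothesis is needed: for three or more elements, the $\Fq$-independence of $\{b_j\}$ is generally not inherited by $\{b_j^{-1}\}$.

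\emph{Step 3: conclude.} Collect the blocks $A_i^{-1}\subseteq[a_i^{-1}]_\theta$. Each is P-independent by Step 2, and the norms $N_t(a_i^{-1})$ are pairwise distinct by Step 1. The partition lemma then shows that $S^{-1}$ is P-independent.

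A minor bookkeeping point: distinct $a_i$ may a priori lie in the same class. This cannot happen here, because conjugate elements have equal norm (indeed $N_t(\theta(b)ab^{-1})=N_t(a)$), while the $a_i$ have distinct norms.
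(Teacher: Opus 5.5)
Your proof is correct and follows essentially the same route as the paper. The paper also splits $S$ by conjugacy classes, notes that inversion keeps norms distinct, and uses $(\prescript{b}{}{a})^{-1}=\prescript{b^{-1}}{}{(a^{-1})}$ to reduce the two-element case to the $\Fq$-independence of $b_1^{-1},b_2^{-1}$; it checks that independence by multiplying a relation by $b_1b_2$, which is equivalent to your scalar-multiple argument.
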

    \begin{proof}
        It suffices to prove the statement for P-independent elements within a conjugacy class since norms stay distinct after inversion by \cref{eqn:inverse.of.norm.equals.norm.of.inverse}. To that end, split $S$ by conjugacy classes (which all have distinct norms) and consider one such conjugacy class $A = \{\prescript{b_1}{}{a},\ldots, \prescript{b_m}{}{a}\} \subseteq [a]_\theta \cap S$ with $1 \leq m \leq \min (2,t)$. Note that $b_j \neq 0 $ for $j = 1,\ldots, m$. Since $S$ is P-independent, $A$ is P-independent. Furthermore, $A$ is P-independent if, and only if, $b_1,\ldots, b_m$ are $\Fq$-linearly independent. If $|A| \leq 1$, then obviously $A^{-1}$ is P-independent. Suppose $|A| = 2$. Notice that \[(\prescript{b_j}{}{a})^{-1} = (\theta(b_j)ab_j^{-1})^{-1} = \theta(b_j^{-1})a^{-1}(b_j^{-1})^{-1} = \prescript{b_j^{-1}}{}{a^{-1}}\] for $j = 1,2$. Hence, we need to show that if $b_1,b_2$ are $\Fq$-linearly independent, then $b_1^{-1},b_2^{-1}$ are $\Fq$-linearly independent.

        Suppose there exist some $u_1,u_2 \in \Fq$ such that $u_1b_1^{-1} + u_2b_2^{-1} = 0$. Multiplying by $b_1b_2$, we obtain $u_1b_2 + u_2b_1 = 0$. This contradicts the $\Fq$-linear independence of $b_1,b_2$. Therefore, $b_1^{-1},b_2^{-1}$ are $\Fq$-independent and thus $A^{-1}$ is P-independent. The claim follows.
    \end{proof}

    \begin{remark}
        If the automorphism is trivial, \ie $\theta = \mathrm{id}$, then every $\theta$-conjugacy class is a singleton and P-independence is equivalent to distinctness, so the P-independence of the inverse as in \Cref{lem:inversion.preserves.p.independence} holds without any hypothesis. In particular, the classical Goppa code is still a special case of the GSG code with the restriction of P-independent sets as in \Cref{lem:inversion.preserves.p.independence}.
    \end{remark}

\begin{example}\label{ex:3.elements.inverted}
        We want to show how P-independence can be preserved and fail under inversion if we allow more than $2$ elements per conjugacy class. Consider $\F_8 = \F_2[\omega]/[\omega^3 + \omega + 1]$ and let $\theta:\F_8 \to \F_8$ be the usual Frobenius automorphism $\theta(a) = a^2$. Let $\{1,\omega,\omega^2\}$ be a basis of $\F_8$ over $\F_2$. It is not too difficult to check that the roots of $X^3 + X^2 + 1$ are \[\omega + 1, \omega^2 + 1, \omega^2 + \omega + 1.\]  Expanding the roots with respect to the basis, we obtain the matrix \[\begin{bmatrix}
            1&1&1\\1&0&1\\0&1&1
        \end{bmatrix}\] which has full rank so the roots are $\F_2$-linearly independent. The inverses of these roots are \[\omega^2 + \omega, \omega,\omega^2,\] respectively, but they are not $\F_2$-linearly independent. Hence, if the first set of roots is used as conjugating elements, then the second set of roots would be the corresponding conjugating elements for the inverted elements. Since the second set of roots fails to be $\F_2$-linearly independent, the inverted elements cannot be P-independent.

        Now take the basis elements $1,\omega,\omega^2$. They are clearly $\F_2$-linearly independent. Inverting these elements, we obtain \[1, \omega^2 + 1, \omega^2 + \omega + 1.\] Expanding these three elements with respect to the basis, we obtain the matrix \[\begin{bmatrix}
            1&1&1\\0&0&1\\0&1&1
        \end{bmatrix}\]  which has full rank so the roots are $\F_2$-linearly independent. Therefore, this gives rise to $3$ elements within the same conjugacy class which are P-independent, and their inverses are also P-independent.
    \end{example}
\Cref{lem:inversion.preserves.p.independence} is a sufficient condition for $S^{-1}$ to be a P-independent set, when $S$ is P-independent and is useful for checking and building such P-independent sets. However, as \Cref{ex:3.elements.inverted} shows, it is not a necessary condition.

The length of the GSG code is bounded by the size of the P-independent set on which it is evaluated. Since there are $q-1$ nonzero $\theta$-conjugacy classes, P-independent sets have size at most $t(q-1) + 1$. P-independent sets such that their inverted set is also P-independent have size at most $t(q-1)$ and P-independent sets built using \Cref{lem:inversion.preserves.p.independence} have size at most $2(q-1)$. 

\subsection{Parity-check matrix of GSG codes}\label{sec:skewgoppa.parity.check}
A parity-check matrix for the Generalized Skew Goppa codes was studied in \cite[Sec.~4]{gomez-torrecillas2023}. In this subsection, we rewrite it in a way which will be convenient for us later on in \Cref{sec:skew.multivariate.goppa}.

To this end, it will be convenient to characterize elements $g \in R$ that are invariant as required in \Cref{def:skew.goppa.code}. A more general version of \Cref{thm:invariant.elements.of.R} below is proven in \cite[Thm~1.1.22]{jacobson2009finite}, but we provide a self-contained characterization for completeness. In the following, we will use the well-known result that the center of $R$, that is, the set of all elements of $R$ that commute with every element of $R$, is $Z(R)=\Fq[X^t]$.

\begin{theorem}\label{thm:invariant.elements.of.R}
Let $R = \mathbb{F}_{q^t}[X; \theta]$.
Then a nonzero element $g \in R$ satisfies $gR = Rg$ 
(\ie $g$ is \emph{invariant}) 
if, and only if,
\[
g = a\cdot v \cdot X^l,
\]
for some $v(X) \in Z(R)$, $a \in \mathbb{F}_{q^t}$, and $l \in \mathbb{N}$.
\end{theorem}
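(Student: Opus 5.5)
The plan is to prove both directions, with the forward direction carrying the content. For the converse, suppose $g = a v X^l$ with $a\neq 0$ and $v\in Z(R)$. Each factor is invariant: $aR = Ra = R$ because $a$ is a unit, $vR = Rv$ because $v$ is central, and $X^lR = RX^l$ because $X^l b = \theta^l(b)X^l$ and $\theta$ is bijective. Products of invariant elements are invariant, since $g_1g_2R = g_1Rg_2 = Rg_1g_2$.

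For the forward direction, let $g = \sum_{i=0}^{\rho} g_iX^i$ with $g\neq 0$ and $gR = Rg$.

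First I factor out the power of $X$. Let $l$ be the smallest index with $g_l\neq 0$ and write $g = g'X^l$, so that $g'$ has nonzero constant term. Since $X^l$ is invariant, $g'X^lR = g'RX^l$ and $Rg'X^l = RgX^l\cdots$; more precisely, $gR = g'RX^l$ and $Rg = Rg'X^l$. Cancelling $X^l$ on the right (valid because $R$ is a domain) gives $g'R = Rg'$. So I may assume $g_0\neq 0$ and aim to show that $g$ is a scalar times a central element.

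The key step uses degree counting.
\begin{itemize}
\item Take $b\in\Fqt$. Invariance gives $gb = b'g$ for some $b'\in R$, and comparing degrees forces $b'\in\Fqt$.
\item Expanding, $gb = \sum_i g_i\theta^i(b)X^i$ and $b'g = \sum_i b'g_iX^i$. Comparing constant terms yields $b' = b$, using $g_0\neq 0$.
\item Then for every $i$ with $g_i\neq 0$ we get $\theta^i(b) = b$ for all $b$, hence $t\mid i$.
\end{itemize}

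Similarly, $gX = cg$ with $c\in\Fqt$ by degree comparison. Comparing coefficients of $X^{i+1}$ gives $g_i = c\,g_{i+1}$ for all $i$, with $g_{-1}=0$. To extract normalization, I will instead compare $Xg$ with $gX$. Since $Xg = \sum\theta(g_i)X^{i+1}$ and also $Xg = g c''$ for some $c''\in\Fqt$ by invariance, we get $\theta(g_i) = g_i\theta^{i}(c'') = g_ic''$, using $t\mid i$. Hence $\theta(g_i/g_0) = g_i/g_0$ for all $i$, so $g_i/g_0\in\Fq$. Writing $a = g_0$ and $v = \sum_i (g_i/g_0)X^i$, we have $v\in\Fq[X^t] = Z(R)$ and $g = av$.

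The main obstacle is getting this coefficient bookkeeping right: choosing on which side to multiply by $X$ and by scalars, and confirming that the degree argument forces the cofactors to be constants. I will double-check that $X g = gc''$ is legitimate: $Xg\in Rg = gR$, and degree comparison forces the cofactor $c''$ to have degree $0$.
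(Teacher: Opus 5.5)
There is a genuine gap in the step with $X$. Your converse, your reduction to the case $g_0\neq 0$, and your argument that $gb=b'g$ forces $b'=b$ are all correct and follow the paper. (Your cancellation of $X^l$ actually spells out a step the paper only asserts.) The same goes for concluding that $t\mid i$ whenever $g_i\neq 0$.

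Multiplying $g$ by $X$ on either side raises the degree by exactly one, and $\deg(gc'')=\deg g+\deg c''$. So any $c''\in R$ with $Xg=gc''$ must have degree $1$. Your claim that degree comparison forces $c''\in\Fqt$ is false; in fact no constant $c''$ can satisfy $Xg=gc''$ when $g\neq 0$. The same error breaks your discarded relation $gX=cg$ with $c\in\Fqt$: comparing constant terms would give $cg_0=0$, hence $c=0$ and $gX=0$, which is absurd. Accordingly, $\theta(g_i)=g_i\theta^i(c'')$ is not a legitimate coefficient comparison. Here $\theta(g_i)$ is the coefficient of $X^{i+1}$ in $Xg$, whereas $g_i\theta^i(c'')$ would be the coefficient of $X^{i}$ in $gc''$.

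The repair is short. Write the cofactor as $c_1X+c_0$ with $c_0,c_1\in\Fqt$. Since $\theta^i=\mathrm{id}$ whenever $g_i\neq 0$, you get $g(c_1X+c_0)=\sum_i g_ic_1X^{i+1}+\sum_i g_ic_0X^i$, while $Xg=\sum_i\theta(g_i)X^{i+1}$. Comparing constant terms gives $g_0c_0=0$, so $c_0=0$ because $g_0\neq 0$. Comparing coefficients of $X^{i+1}$ then gives $\theta(g_i)=g_ic_1$ with $c_1=\theta(g_0)/g_0$. From there, $\theta(g_i/g_0)=g_i/g_0$ follows as you wrote.

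The paper does the mirror-image computation $vX=(r_1X+r_0)v$, after normalizing the constant term to $v_0=1$. It obtains $r_0=0$, $r_1=1$ and hence $\theta(v_i)=v_i$. So your overall strategy is the paper's. What is missing is that the cofactor is linear, together with the constant-term comparison showing its constant coefficient vanishes; that is what makes the final step valid.
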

\begin{proof}
    Clearly $0 \in \Fqt$ is invariant. Any nonzero element $a \in \Fqt$ is invariant because for any $f = \sum_{i=0}^k f_i X^i \in R$, \[Ra \ni f \cdot a = \sum_{i=0}^k f_i \theta^i(a) X^i = a \cdot \sum_{i=0}^k f_i \frac{\theta^i(a)}{a} X^i \in aR.\]

    For any $l \in \mathbb{N}$, $X^l$ is invariant because \[RX^l \ni f \cdot X^l = \sum_{i=0}^k f_iX^{i+l} = X^l \sum_{i=0}^k \theta^{-l}(f_i)X^i \in X^lR.\]

    Hence, any invariant element of $R$ has the form $a \cdot v(X) \cdot X^j$ where $a \in \Fqt$ and \[v(X) = 1 + v_1X + \cdots + v_mX^m,\quad v_i \in \Fqt, v_m \neq 0\] is invariant. For $v(X)$ to be invariant, we must satisfy that for every $b \in \Fqt$ there exists $b' \in R$ such that $v \cdot b = b' \cdot v$ and there exists $r(X) \in R$ such that $v(X) \cdot X = r(X) \cdot v(X)$. By degree considerations, it follows that $b' \in \Fqt$ and $\deg r = 1$ so $r(X) = r_1X + r_0$ for $r_0, r_1 \in \Fqt$.

    Suppose $b \neq 0$ (otherwise, invariance is trivially satisfied). From $v \cdot b =b' \cdot v$, we have \begin{align*}
        v\cdot b &= (1 + v_1X + \cdots + v_mX^m)b\\
        &= b + v_1 \theta(b)X + \cdots + v_m\theta^m(b) X^m\\
        &=b' + b'v_1X + \cdots + b'v_mX^m\\
        &= b'\cdot v
    \end{align*} which implies $b = b'$ and $\theta^i(b) = b' = b$ so $t \mid i$. Furthermore, because $v(X) \cdot X = r(X) \cdot v(X)$, we have \begin{align*}
        v(X) \cdot X &= X + v_1X^2 + \cdots + v_mX^{m+1}\\
        &= r_0 + (r_1 + r_0v_1)X + \cdots r_1\theta(v_m)X^{m+1}\\
        &= (r_1 X + r_0)(1 + v_1X + \cdots + v_mX^m)\\
        &= r(X) \cdot v(X)
    \end{align*} which forces $r_0 = 0$, $r_1 = 1$, and $\theta(v_i) = v_i$ so $v_i \in \Fq$. Hence, $r(X) = X$ and $v(X) \in \Fq[X^t] = Z(R)$, as desired.
\end{proof}

We are now ready to construct the parity-check matrix of GSG codes. We choose to divide $g$ by $X - s_j$ on the right. Let \[g(X) = a\cdot v(X) \cdot X^l = a(v_0 + v_1X^t + \cdots  + v_mX^{tm})X^l\] as described in \Cref{thm:invariant.elements.of.R}. Without loss of generality, we assume that $a, v_m \neq 0$ and $v_0 = 1$, and define $\rho:= tm + l$. For convenience, let us write \[g(X) = g_\rho X^\rho + g_{\rho-1}X^{\rho-1} + \cdots + g_0,\] where \[g_u = \begin{cases}
    av_{m-b} & \text{for } u = \rho-bt, b = 0,\ldots, m\\
    0 & \text{otherwise}
\end{cases}.\]  Let $q_j(X) = q_{\rho-1,j}X^{\rho-1} + \cdots + q_{1,j}X + q_{0,j}$ and suppose $g(X) = q_j(X)(X-s_j) + r_j$,  $r_j\in \Fqt$, for all $j=1,\dots, n$. Then \begin{align*}
    g(X) &= av_mX^{tm+l} + av_{m-1}X^{t(m-1)+l} + \cdots + aX^l\\ &= q_{\rho-1,j}X^\rho + (q_{\rho-2,j} - q_{\rho-1,j}\theta^{\rho-1}(s_j))X^{\rho-1} +\\&\qquad \cdots + (q_{0,j} - q_{1,j}\theta(s_j))X - q_{0,j}s_j + r_j.
\end{align*}
Equating coefficients, we get:
\begin{align}\label{eq:h_structure_skew}
    q_{i,j} &= \sum_{b=0}^{\left\lfloor\frac{\rho-i-1}{t}\right\rfloor}av_{m-b}\prod_{s=bt+1}^{\rho-i-1}\theta^{\rho-s}(s_j)\nonumber\\ &= \sum_{b=0}^{\left\lfloor\frac{\rho-i-1}{t}\right\rfloor}av_{m-b}\prod_{s=i+1}^{\rho-(bt+1)}\theta^{s}(s_j)\nonumber\\
    &= \sum_{b=0}^{\left\lfloor\frac{\rho-i-1}{t}\right\rfloor}av_{m-b}\frac{N_{\rho-bt}(s_j)}{N_{i+1}(s_j)}\nonumber\\
    &= \sum_{b=0}^{\left\lfloor\frac{\rho-i-1}{t}\right\rfloor}g_{\rho-bt}\frac{N_{\rho-bt}(s_j)}{N_{i+1}(s_j)}
\end{align}
for $ 0 \leq i \leq \rho-1, 1 \leq j \leq n$, and
\begin{align*}
    r_j = g_0 + q_{0,j} s_j
        &= g_0 + \left(\sum_{b=0}^{\left\lfloor\frac{\rho-1}{t}\right\rfloor}g_{\rho-bt}\frac{N_{\rho-bt}(s_j)}{N_{1}(s_j)}\right)\cdot s_j\\
         &= g_0 + \sum_{b=0}^{\left\lfloor\frac{\rho-1}{t}\right\rfloor}g_{\rho-bt}N_{\rho-bt}(s_j)\\
         & = \sum_{i=0}^\rho g_iN_i(s_j)= g(s_j) \neq 0
\end{align*} otherwise $s_j$ would be a root of $g$. Multiplying by $-r_j^{-1}$ on the left we obtain
$-r_j^{-1}g(X) = -r_j^{-1}q_j(X)(X-s_j) - 1$. Let $h_j(X) = -r_j^{-1}q_j(X)$; this is the desired polynomial.

Clearly, a parity-check matrix is $-H = (-r_j^{-1} q_{i,j}\eta_j)_{0 \leq i \leq \rho-1, 1 \leq j\leq n}$ and an equivalent parity-check matrix is $H = (g(s_j)^{-1} q_{i,j}\eta_j)_{0 \leq i \leq \rho-1, 1 \leq j\leq n}$. We can see that
\begin{equation}\label{eq:HHRE}
\begin{aligned}
    H = \underbrace{\begin{bmatrix}
        q_{0,1} & \cdots & q_{0,n}\\
        \vdots & \ddots & \vdots\\
        q_{\rho-1,1}& \cdots & q_{\rho-1,n}
    \end{bmatrix}}_{H'}\underbrace{\begin{bmatrix}
        g(s_1)^{-1}\\
        &\ddots\\
        &&g(s_n)^{-1}
    \end{bmatrix}}_{R}\underbrace{\begin{bmatrix}
        \eta_1\\
        &\ddots\\
        &&\eta_n
    \end{bmatrix}}_{E}
\end{aligned}
\end{equation}
and using \cref{eq:h_structure_skew} we have
\begin{align*}
    H'&=\begin{bmatrix}
        q_{0,1} & \cdots & q_{0,n}\\
        \vdots & \ddots & \vdots\\
        q_{\rho-1,1}& \cdots & q_{\rho-1,n}
    \end{bmatrix}\\ &= \begin{bmatrix} \sum_{b=0}^{\left\lfloor\frac{\rho-1}{t}\right\rfloor}g_{\rho-bt}\frac{N_{\rho-bt}(s_1)}{N_{1}(s_1)} & \cdots & \sum_{b=0}^{\left\lfloor\frac{\rho-1}{t}\right\rfloor}g_{\rho-bt}\frac{N_{\rho-bt}(s_n)}{N_{1}(s_n)}\\
        \sum_{b=0}^{\left\lfloor\frac{\rho-2}{t}\right\rfloor}g_{\rho-bt}\frac{N_{\rho-bt}(s_1)}{N_{2}(s_1)} & \cdots &  \sum_{b=0}^{\left\lfloor\frac{\rho-2}{t}\right\rfloor}g_{\rho-bt}\frac{N_{\rho-bt}(s_n)}{N_{2}(s_n)}\\
        \vdots & \ddots & \vdots\\
         \sum_{b=0}^{\left\lfloor\frac{\rho-(\rho-1)-1}{t}\right\rfloor}g_{\rho-bt}\frac{N_{\rho-bt}(s_1)}{N_{\rho}(s_1)}& \cdots & \sum_{b=0}^{\left\lfloor\frac{\rho-(\rho-1)-1}{t}\right\rfloor}g_{\rho-bt}\frac{N_{\rho-bt}(s_n)}{N_{\rho}(s_n)}
    \end{bmatrix}\\&= \begin{bmatrix}
       \sum_{b=0}^{\left\lfloor\frac{\rho-1}{t}\right\rfloor}g_{\rho-bt}\frac{N_{\rho-bt}(s_1)}{N_{1}(s_1)} & \cdots & \sum_{b=0}^{\left\lfloor\frac{\rho-1}{t}\right\rfloor}g_{\rho-bt}\frac{N_{\rho-bt}(s_n)}{N_{1}(s_n)}\\
        \sum_{b=0}^{\left\lfloor\frac{\rho-2}{t}\right\rfloor}g_{\rho-bt}\frac{N_{\rho-bt}(s_1)}{N_{2}(s_1)} & \cdots &  \sum_{b=0}^{\left\lfloor\frac{\rho-2}{t}\right\rfloor}g_{\rho-bt}\frac{N_{\rho-bt}(s_n)}{N_{2}(s_n)}\\
        \vdots & \ddots & \vdots\\
         g_\rho& \cdots & g_\rho
    \end{bmatrix}.
    \end{align*}

    Notice that for fixed $i$ and for every $j$, the term in position $(i,j)$ \[g_{\rho-bt}\frac{N_{\rho-bt}(s_j)}{N_{i+1}(s_j)}\] and the term in position $(i-t,j)$ 
    \[g_{\rho-(b+1)t}\frac{N_{\rho-(b+1)t}(s_j)}{N_{i-t+1}(s_j)} =g_{\rho-(b+1)t}\theta^t\left(\frac{N_{\rho-(b+1)t}(s_j)}{N_{i-t+1}(s_j)}\right) \stackrel{\eqref{eqn:ratio.of.norms}}{=} g_{\rho-(b+1)t}\frac{N_{\rho-bt}(s_j)}{N_{i+1}(s_j)} \]  only differ by the coefficient. In particular, we can systematically row reduce $H'$ such that it can be transformed, by multiplying by an invertible matrix $T$, into the matrix 
    \begin{align*}
    H'' &= \begin{bmatrix}
        g_\rho\frac{N_\rho(s_1)}{N_1(s_1)} & \cdots & g_\rho\frac{N_\rho(s_n)}{N_1(s_n)}\\
        g_\rho\frac{N_\rho(s_1)}{N_2(s_1)} & \cdots & g_\rho\frac{N_\rho(s_n)}{N_2(s_n)}\\
        \vdots & \ddots & \vdots\\
        g_\rho\frac{N_\rho(s_1)}{N_\rho(s_1)} & \cdots & g_\rho\frac{N_\rho(s_n)}{N_\rho(s_n)}
    \end{bmatrix}\\
    &=  \begin{bmatrix}
        1 & \cdots & 1\\
        \frac{N_1(s_1)}{N_2(s_1)} & \cdots & \frac{N_1(s_n)}{N_2(s_n)}\\
        \vdots & \ddots & \vdots\\
        \frac{N_1(s_1)}{N_\rho(s_1)} & \cdots & \frac{N_1(s_n)}{N_\rho(s_n)}\end{bmatrix}
        \begin{bmatrix}g_\rho \frac{N_\rho(s_1)}{N_1(s_1)}\\&g_\rho\frac{N_\rho(s_2)}{N_1(s_2)}\\&&\ddots\\&&&g_\rho \frac{N_\rho(s_n)}{N_1(s_n)}
        \end{bmatrix}\\
&\stackrel{\eqref{eqn:inverse.of.norm.equals.norm.of.inverse}}{=} \begin{bmatrix}
        1 & \cdots & 1\\
        \theta(N_1(s_1^{-1})) & \cdots & \theta(N_1(s_n^{-1}))\\
        \vdots & \ddots & \vdots\\
        \theta(N_{\rho-1}(s_1^{-1})) & \cdots & \theta(N_{\rho-1}(s_n^{-1}))\end{bmatrix}\\&\qquad \times 
        \underbrace{\begin{bmatrix}g_\rho\theta(N_{\rho-1}(s_1))\\&g_\rho\theta(N_{\rho-1}(s_2))\\&&\ddots\\&&&g_\rho\theta(N_{\rho-1}(s_n))
        \end{bmatrix}}_{D} \stepcounter{equation}\tag{\theequation}\label{eq:matrixD}.
    \end{align*}
Finally, we have shown the following.
\begin{theorem}\label{th:newparitycheck}
    The Generalized Skew Goppa code $\tilde{\Gamma}(S,\eta,g,\Fqr)$ has a parity-check matrix of the form
\[\begin{bmatrix}
            1 & \cdots & 1\\
            \theta(N_1(s_1^{-1})) & \cdots & \theta(N_1(s_n^{-1}))\\
            \vdots & \ddots & \vdots\\
            \theta(N_{\rho-1}(s_1^{-1})) & \cdots & \theta(N_{\rho-1}(s_n^{-1}))\end{bmatrix}
            \cdot DRE, \]
where $D$ is given in \cref{eq:matrixD} and $R$ and $E$ are given in \cref{eq:HHRE}, respectively.
\end{theorem}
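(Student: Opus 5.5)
The plan is to retrace the computation preceding the statement and check the one step that was only sketched: that $H'$ can be row-reduced to $H''$ by an invertible $T$. Then $TH'RE = H''RE$ is a parity-check matrix of $\tilde\Gamma(S,\eta,g,\Fqr)$ equivalent to $H = H'RE$. This holds because left multiplication by an invertible matrix does not change the kernel, so it preserves the code defined over $\Fqr$ by intersecting with $\Fqr^n$.

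First, I would confirm that $H = H'RE$ is a parity-check matrix. Expand $\sum_j c_j\eta_j h_j$ in the basis $1,X,\dots,X^{\rho-1}$ with $h_j=-r_j^{-1}q_j$ and $r_j=g(s_j)$. Each $h_j$ has degree $<\rho=\deg g$, so the sum is zero exactly when every coefficient vanishes. Since the $c_j\in\Fqr$ and $\eta_j$ are scalars multiplying on the left, the coefficient of $X^i$ is $-\sum_j c_j\eta_j g(s_j)^{-1}q_{i,j}$. Thus the code is $\{c\in\Fqr^n : H c^T=0\}$ up to sign, with $H'$ having rows indexed by $i$ and columns by $j$. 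Note that $\eta_j$ and $g(s_j)^{-1}$ commute, since $\Fqt$ is commutative.

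Second, the row reduction, which I expect to be the main obstacle, done by induction on $i$. By \eqref{eq:h_structure_skew}, row $i$ of $H'$ equals $\sum_{b} g_{\rho-bt} N_{\rho-bt}(s_j)/N_{i+1}(s_j)$. Applying \eqref{eqn:ratio.of.norms} $t$ times, together with $\theta^t=\mathrm{id}$, gives
\[
\frac{N_{\rho-(b+1)t}(s_j)}{N_{i-t+1}(s_j)}=\frac{N_{\rho-bt}(s_j)}{N_{i+1}(s_j)}.
\]
Hence for $b\ge1$, the $b$-th summand of row $i$ equals $(g_{\rho-bt}/g_\rho)$ times the leading summand of row $i-bt$; this term is $g_\rho N_\rho(s_j)/N_{i-bt+1}(s_j)$ and is the same for all $j$. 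So I would define new rows $\tilde r_i := g_\rho N_\rho(s_j)/N_{i+1}(s_j)$ and show by induction on $i$ that $\mathrm{row}_i(H') = \tilde r_i + \sum_{b\ge1} (g_{\rho-bt}/g_\rho)\,\tilde r_{i-bt}$. This is a unit lower-triangular relation with $\Fqt$-coefficients independent of $j$, so it is invertible, and $T$ is its inverse. Here $g_\rho = a v_m\ne0$ is used. The index bookkeeping is the delicate point: I need $\lfloor(\rho-i-1)/t\rfloor$ to match the range where $i-bt\ge0$, which amounts to reindexing rows $i\mapsto \rho-1-i$.

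Third, factor $H''$. Write $N_\rho(s_j)/N_{i+1}(s_j)=\bigl(N_1(s_j)/N_{i+1}(s_j)\bigr)\cdot N_\rho(s_j)/N_1(s_j)$, which pulls out the diagonal factor $g_\rho N_\rho(s_j)/N_1(s_j)=g_\rho\theta(N_{\rho-1}(s_j))$ by \eqref{eqn:ratio.of.norms} with $N_0=1$. The remaining ratio $N_1/N_{i+1}$ is $\theta(N_0(s_j)/N_i(s_j))=\theta(N_i(s_j^{-1}))$, using \eqref{eqn:ratio.of.norms} and \eqref{eqn:inverse.of.norm.equals.norm.of.inverse}. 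The row ordering (with $i$ running from $0$ to $\rho-1$) then gives exactly the displayed matrix times $D$. Combining this with $R$ and $E$ proves \Cref{th:newparitycheck}.
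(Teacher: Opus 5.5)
Your route is the paper's: $H'RE$ is a parity-check matrix, $H'$ is reduced to $H''$ by an invertible $T$ with $j$-independent entries coming from the $t$-periodicity of the norms, and $H''$ is then factored using \eqref{eqn:ratio.of.norms} and \eqref{eqn:inverse.of.norm.equals.norm.of.inverse}. Your first and third steps are fine.

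The second step, however, is wrong as written. The identity you derive says that the $(b+1)$-th summand of row $i-t$ has the same $j$-dependence as the $b$-th summand of row $i$. So moving to a \emph{smaller} row index \emph{raises} $b$, and iterating ties the $b$-th summand of row $i$ to the leading summand of row $i+bt$, not of row $i-bt$. Indeed, since $N_{x+t}(a)=N_t(a)N_x(a)$,
\[
g_{\rho-bt}\frac{N_{\rho-bt}(s_j)}{N_{i+1}(s_j)}=\frac{g_{\rho-bt}}{g_\rho}\,N_t(s_j)^{-2b}\cdot g_\rho\frac{N_\rho(s_j)}{N_{i-bt+1}(s_j)}.
\]
The factor $N_t(s_j)$ in general varies with $j$, because distinct conjugacy classes in $S$ have distinct norms. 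So your coefficients are not constants, and the relation $\mathrm{row}_i(H')=\tilde r_i+\sum_{b}(g_{\rho-bt}/g_\rho)\tilde r_{i-bt}$ does not yield a matrix $T$. Already for $t=1$ it would replace $s_j^{\rho-i-1-b}$ by $s_j^{\rho-i-1+b}$. The range mismatch you noticed is a symptom of this error. Reversing only the row order does not repair it, since your $\tilde r_i$ and the term $g_\rho N_\rho(s_j)/N_{i-bt+1}(s_j)$ are still written in the original indexing.

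The fix is to apply \eqref{eqn:ratio.of.norms} $bt$ times, which gives $N_{\rho-bt}(s_j)/N_{i+1}(s_j)=N_\rho(s_j)/N_{i+bt+1}(s_j)$, hence
\[
\mathrm{row}_i(H')=\tilde r_i+\sum_{b=1}^{\lfloor(\rho-i-1)/t\rfloor}\frac{g_{\rho-bt}}{g_\rho}\,\tilde r_{i+bt}.
\]
The summation range is exactly the condition $i+bt\le\rho-1$, so no reindexing is needed. This is a unit upper-triangular relation with constant coefficients, using $g_\rho=av_m\neq0$. The rows $i\ge\rho-t$ already equal $\tilde r_i$, and back-substitution upward yields $T$ with $TH'=H''$. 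With this correction your argument coincides with the paper's.
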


We can now compute, under some hypothesis, the rank of the parity-check matrix for $\tilde{\Gamma}(S,\eta,g,\Fqr)$ using the theorem above.
\begin{theorem}
    Consider the parity-check matrix $H$ of the Generalized Skew Goppa code $\tilde{\Gamma}(S,\eta,g,\Fqr)$ and suppose $S$ is such that $S^{-1} = \{s_1^{-1},\ldots, s_n^{-1}\}$ is also P-independent. If $\deg g \leq n$, then $H$ has rank $\deg g$.
\end{theorem}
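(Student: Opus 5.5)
By \Cref{th:newparitycheck}, $H$ is, up to left multiplication by the invertible row-reduction matrix $T$, equal to $V\cdot DRE$. Here
\[
V=\bigl(\theta(N_{i}(s_j^{-1}))\bigr)_{0\le i\le \rho-1,\,1\le j\le n},
\]
and we use that $\theta(N_0(\cdot))=1$ gives the first row. The plan is to discard every factor that does not affect rank and then identify $V$ with a skew Vandermonde matrix. Since $\rho=\deg g$, this reduces the claim to showing that $V$ has rank $\rho$ whenever $\rho\le n$.

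First we check that $D$, $R$ and $E$ are invertible diagonal matrices, so they can be dropped.
\begin{itemize}
\item $R$ is invertible because $g(s_j)=r_j\neq 0$.
\item $E$ is invertible because $\eta_j\in\Fqt^*$.
\item $D$ has entries $g_\rho\,\theta(N_{\rho-1}(s_j))$. These are nonzero because $g_\rho=av_m\neq 0$ and $s_j\neq 0$.
\end{itemize}
We need $s_j\neq 0$ for $D$, and it is also implicit in writing $s_j^{-1}$ at all. The hypothesis that $S^{-1}$ exists and is P-independent presupposes it. Alternatively, if $l>0$ then $(X-0,g)_r\neq 1$, and when $l=0$ we can simply assume $0\notin S$.

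The remaining transformation $T$ is invertible by construction, since it comes from the systematic row reduction $H'\mapsto H''$. Concretely, it subtracts multiples of row $i$ from row $i-t$, and this is a unitriangular operation. Hence $\rk H=\rk V$.

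Next we apply \Cref{lem:theta.rank.preserved}. Since $V=\theta(W)$ with $W=(N_i(s_j^{-1}))_{i,j}$, we get $\rk V=\rk W$. The matrix $W$ is the skew (Moore-type) Vandermonde matrix of $S^{-1}$ with $\rho$ rows, as in \cite{BU14} and the evaluation formula $f(a)=\sum_i f_iN_i(a)$ of \cite[Lem.~2.4]{LAM1988308}. The main step is the standard fact that such a matrix has rank $\min(\rho,n)$ when the points are P-independent.

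To prove that fact, suppose a row combination vanishes, i.e.\ $\sum_i c_i N_i(s_j^{-1})=0$ for all $j$. Then the skew polynomial $f=\sum_{i<\rho}c_iX^i$ vanishes at every $s_j^{-1}$, so $f$ is a left multiple of each $X-s_j^{-1}$. It is therefore a left multiple of their least common left multiple, which has degree $n$ because $S^{-1}$ is P-independent. Since $\deg f<\rho\le n$, this forces $f=0$. So the $\rho$ rows of $W$ are independent and $\rk W=\rho$.

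I expect this last step to be the only delicate point. It requires stating correctly the root/right-division correspondence in $R$: $f(a)=0$ if and only if $X-a$ right-divides $f$, and in that case the lclm right-divides $f$ as well. With this in place, $\rk H=\rho=\deg g$.
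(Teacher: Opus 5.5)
Your proof is correct and follows the paper's argument. You reduce $H$ via \Cref{th:newparitycheck} to the $\theta$-twisted Vandermonde matrix of $S^{-1}$, remove $\theta$ with \Cref{lem:theta.rank.preserved}, and conclude from the P-independence of $S^{-1}$. The only differences are these:
\begin{itemize}
\item You spell out why $T$, $D$, $R$, $E$ are invertible. One small correction: clearing row $i-t$ in general needs multiples of rows $i, i+t, i+2t,\dots$, not just row $i$. Your conclusion is unaffected, since $T$ is still triangular with nonzero diagonal.
\item You prove the full row rank of $(N_i(s_j^{-1}))$ directly by the lclm/right-division argument, whereas the paper cites \cite[Thm.~10]{Lam_1986}.
\end{itemize}
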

\begin{proof}
    Given the reductions of $H$ to the matrix in \Cref{th:newparitycheck}, we have the following chain of equalities:
    \begin{align}\label{eqn:rank.of.H}
        \rk(H) 
        &= \rk\begin{bmatrix}
        1 & \cdots & 1\\
        \theta(N_1(s_1^{-1})) & \cdots & \theta(N_1(s_n^{-1}))\\
        \vdots & \ddots & \vdots\\
        \theta(N_{\rho-1}(s_1^{-1})) & \cdots & \theta(N_{\rho-1}(s_n^{-1}))\end{bmatrix}\nonumber\\
        &= \rk\begin{bmatrix}
        1 & \cdots & 1\\
        N_1(s_1^{-1}) & \cdots & N_1(s_n^{-1})\\
        \vdots & \ddots & \vdots\\
        N_{\rho-1}(s_1^{-1}) & \cdots & N_{\rho-1}(s_n^{-1})\end{bmatrix}\nonumber\\
        &= \rho = \deg g.\end{align}
        The second equality follows by applying \Cref{lem:theta.rank.preserved} and the last equality follows from the assumption that $S^{-1}$ is P-independent and \cite[Thm.~10]{Lam_1986}.
\end{proof}
As an immediate corollary, we obtain the dimension and distance of $\tilde{\Gamma}(S,\eta,g,\Fqr)$. As mentioned before, those were already known in the literature, hence we skip the proof here.

\begin{corollary}\label{cor:GSG.parameters}
      Let $\tilde{\Gamma}(S, \eta, g, \Fqr)$ be a Generalized Skew Goppa code and suppose $S^{-1}$ is also P-independent. Then  $\tilde{\Gamma}$ has length $n$, dimension $k$ satisfying \[n - \frac{t}{r}\deg g  \leq k \leq n - \deg g\] and minimum distance $d \geq \deg g + 1$.
\end{corollary}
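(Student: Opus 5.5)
The length is immediate, since codewords of $\tilde\Gamma(S,\eta,g,\Fqr)$ lie in $\Fqr^n$. For the dimension I will use the previous theorem: when $S^{-1}$ is P-independent, the parity-check matrix $H$ (over $\Fqt$) has rank $\rho=\deg g$. This presumes $\deg g\le n$; if $\deg g>n$ the lower bound is vacuous, and the upper bound needs separate handling or is excluded implicitly. Let $C\subseteq\Fqt^n$ be the right kernel of $H$. Then $\dim_{\Fqt}C=n-\deg g$ and $\tilde\Gamma=C\cap\Fqr^n$ by \Cref{def:skew.goppa.code}, since the defining condition $\sum_j c_j\eta_jh_j\in Rg$ is exactly $Hc^T=0$ after the invertible row operations. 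A subfield subcode has $\Fqr$-dimension at most the $\Fqt$-dimension of the ambient code, because $\Fqr$-independent vectors of $\Fqr^n$ stay $\Fqt$-independent. This gives $k\le n-\deg g$.

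For the lower bound I will mirror the proof of \Cref{thm:multivariate.goppa.code.parameters}. By Delsarte's Theorem (\Cref{th:delsarte}), $\tilde\Gamma^\perp=\Tr(C^\perp)$, and $C^\perp$ is the row space of $H$, of $\Fqt$-dimension $\deg g$, hence of $\Fqr$-dimension $\frac{t}{r}\deg g$. Since $\Tr$ is $\Fqr$-linear, $\dim_{\Fqr}\Tr(C^\perp)\le\frac{t}{r}\deg g$, and therefore $k\ge n-\frac{t}{r}\deg g$. Equivalently, one can expand each row of $H$ over an $\Fqr$-basis of $\Fqt$ to get at most $\frac{t}{r}\deg g$ linear constraints over $\Fqr$.

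For the minimum distance it suffices to show that any $\deg g$ columns of $H$ are $\Fqt$-linearly independent, since then every nonzero $c\in C$ has weight at least $\deg g+1$, and the same holds for its subcode $\tilde\Gamma$. By \Cref{th:newparitycheck}, $H=T^{-1}VDRE$, where $D$, $R$ and $E$ are invertible diagonal matrices (their entries $g_\rho\theta(N_{\rho-1}(s_j))$, $g(s_j)^{-1}$ and $\eta_j$ are nonzero because $s_j\neq0$). Hence it suffices to treat the Vandermonde-type matrix $V$ with rows $\theta(N_i(s_j^{-1}))$. By \Cref{lem:theta.rank.preserved}, applying $\theta^{-1}$ to $V$ preserves the rank of every column submatrix, reducing to the skew Vandermonde matrix $(N_i(s_j^{-1}))_{0\le i<\rho}$. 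Any subset of a P-independent set is P-independent, so any $\rho$ of the $s_j^{-1}$ form a P-independent set, and by \cite[Thm.~10]{Lam_1986} the corresponding $\rho\times\rho$ skew Vandermonde minor is invertible.

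The main obstacle is this last step: checking that P-independence of $S^{-1}$ passes to subsets and gives full-rank square minors, not merely full rank of the whole matrix. Both facts follow from the lclm-degree definition, since the lclm of a subfamily divides the lclm of the whole family.
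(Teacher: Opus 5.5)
Your proof is correct and is essentially the paper's argument: the dimension bounds come from $\rk H=\deg g$ and Delsarte's theorem exactly as in the proof of \Cref{thm:multivariate.goppa.code.parameters}, and your check that every $\rho\times\rho$ skew Vandermonde minor on $S^{-1}$ is invertible is the column-level form of the paper's remark after \Cref{thm:GSG.subfield.subcode.GSRS}, namely that $\tilde\Gamma$ lies in $\theta(\mathrm{GSRS}_\rho(S^{-1},\theta^{-1}(u))^\perp)$, the dual of an MDS code and hence itself MDS with distance $\rho+1$. One correction to your last line: the lclm of a subfamily dividing that of the whole family only bounds its degree from above, which is the wrong direction, so heredity of P-independence should instead come from $\deg \mathrm{lclm}(A\cup B)\le \deg \mathrm{lclm}(A)+\deg \mathrm{lclm}(B)$ (a standard fact the paper also uses implicitly when it asserts that GSRS codes are MDS).
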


\begin{remark}\label{rem:commutative_h}
When $\theta = \mathrm{id}$, we have \[q_{i,j} = \sum_{l=1}^{\rho-i}g_{i+l}\theta^{i+1}(N_{l-1}(s_j)) = \sum_{l=1}^{\rho-i}g_{i+l}s_j^{l-1}.\] Combined with $\eta_i =1$ for all $i =1, \ldots, n$, $H = H'RE$ (see \cref{eq:HHRE}) reduces to the usual Goppa code parity-check matrix \cite[p.~340]{MacWilliams_Sloane_1977} as expected:
\begin{align*}
    H'RE = \begin{bmatrix}
        g_1 &  \cdots & g_\rho\\
        \vdots & \iddots\\
        g_\rho
    \end{bmatrix}\begin{bmatrix}
        1 & \cdots & 1\\
        s_1 & \cdots & s_n\\
        \vdots  & \ddots & \vdots\\
        s_1^{\rho-1}  & \cdots & s_n^{\rho-1}
    \end{bmatrix}\begin{bmatrix}
        g(s_1)^{-1}\\
        &\ddots\\
        &&g(s_n)^{-1}
    \end{bmatrix}.
\end{align*}
\end{remark}
\begin{remark}\label{rem:gomez_distance_comparison}
    In \cite{gomez-torrecillas2023}, the distance of $\tilde{\Gamma}$ is proved to be at least $\deg g + 1$ by providing a decoding algorithm rather than analyzing a parity-check matrix. To argue the rank of $H$, we needed the additional assumption that $S^{-1}$ is P-independent. It is an open question if we can derive a parity-check matrix which does not need to resort to $S^{-1}$ to determine its rank and match the distance bound in \cite{gomez-torrecillas2023} in full generality. Also, we note $\rk_{\Fqt}(H) = \deg g$ is only necessary to establish the upper bound on the dimension, $k$. The lower bound follows simply from $\rk_{\Fqt}(H) \leq \deg g$, which is true without any additional assumptions.
\end{remark}

\subsection{Generalized Skew Goppa codes as subfield subcodes}\label{sec:GSG.inside.GSRS}
Generalized Skew Evaluation (GSE) codes and, specifically,  Generalized Skew Reed--Solomon (GSRS) codes have been studied in \cite{BU14, liu2015,gomez-torrecillas-2019}. Goppa codes in the Hamming metric are known to be subfield subcodes of (duals of) Generalized Reed--Solomon codes. Up to now, this characterization was missing in the skew case. In this subsection, we will use the form of the parity-check matrix constructed so far in this section to show that a Generalized Skew Goppa code is a subfield subcode of the dual of a Generalized Skew Reed--Solomon code (up to automorphism).

\begin{definition}\label{def:gse.and.gsrs}
    Let $S = \{s_1,\ldots, s_n\} \subseteq \Fqt$ such that $\rk (V_\theta(s_1,\ldots, s_n)) \geq k$, where 
    \[V_\theta(s_1,\ldots, s_n) := \begin{bmatrix}
        1 & \cdots & 1\\
        N_1(s_1) & \cdots & N_1(s_n)\\
        \vdots & \ddots & \vdots\\
        N_{n-1}(s_1) & \cdots & N_{n-1}(s_n)
    \end{bmatrix}.\] Additionally, choose $v = \{v_1,\ldots, v_n\} \subseteq \Fqt^*$. The Generalized Skew Evaluation (GSE) code of length $n$, dimension $k$, and with evaluation points $S$ and multiplier weights $v$ is given by \[\mathrm{GSE}_{k}(S,v) = \{(v_1f(s_1),\ldots, v_nf(s_n)) \, :\, f \in \Fqt[X;\theta], \deg f < k\}.\] If $\rk (V_\theta(s_1,\ldots, s_n)) = n$, then we call $\mathrm{GSE}_{k}(S,v)$ a Generalized Skew Reed--Solomon code and denote the code as $\mathrm{GSRS}_{k}(S,v)$.
\end{definition}

The evaluation points $S$ of a GSRS are P-independent, so any degree $k-1$ polynomial can vanish on at most $k-1$ points of $S$. Therefore, the weight of any codeword is at least $n-k+1$, but by the Singleton bound, the distance must be exactly $n-k+1$. Hence, the GSRS codes are MDS. Furthermore, the generator matrix $G$ for $\mathrm{GSRS}_{k}(S,v)$ as defined in \Cref{def:gse.and.gsrs} has the form \[G = \begin{bmatrix}1 & \cdots & 1\\
N_1(s_1) & \cdots & N_1(s_n)\\
\vdots & \ddots & \vdots\\
N_{k-1}(s_1) & \cdots & N_{k-1}(s_n)\end{bmatrix}\begin{bmatrix}v_1\\&v_2\\&&\ddots\\&&&v_n\end{bmatrix}.\]
Before proceeding with the main result, we prove the following easy lemma.
\begin{lemma}\label{lem:theta.commutes.with.perp}
    Given a linear code $C \subseteq \Fqt^n$ and the Frobenius automorphism $\theta$, we have that \[\theta(C^\perp) = \theta(C)^\perp.\]
\end{lemma}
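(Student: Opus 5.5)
We prove the two inclusions $\theta(C^\perp)\subseteq\theta(C)^\perp$ and $\theta(C)^\perp\subseteq\theta(C^\perp)$, where $\theta$ acts on vectors componentwise and on codes elementwise. The only facts needed are that $\theta$ is a field automorphism of $\Fqt$ and that it is bijective on $\Fqt^n$.

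\emph{First inclusion.} Let $x\in C^\perp$ and $y\in C$. Since $\theta$ is a ring homomorphism,
\[
\langle \theta(x),\theta(y)\rangle=\sum_{i=1}^n \theta(x_i)\theta(y_i)=\theta\Big(\sum_{i=1}^n x_iy_i\Big)=\theta(0)=0 .
\]
Every element of $\theta(C)$ has the form $\theta(y)$ with $y\in C$, so $\theta(x)\in\theta(C)^\perp$. Hence $\theta(C^\perp)\subseteq\theta(C)^\perp$.

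\emph{Second inclusion.} Let $z\in\theta(C)^\perp$ and put $x=\theta^{-1}(z)$, which makes sense because $\theta$ is bijective on $\Fqt^n$. For every $y\in C$ we have $\langle z,\theta(y)\rangle=0$. Applying $\theta^{-1}$, which is also a field automorphism, gives
\[
\langle x,y\rangle=\theta^{-1}\big(\langle z,\theta(y)\rangle\big)=0 .
\]
So $x\in C^\perp$, and therefore $z=\theta(x)\in\theta(C^\perp)$.

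The two inclusions give $\theta(C^\perp)=\theta(C)^\perp$.

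For completeness, note that $\theta(C)$ is again an $\Fqt$-linear code. It is closed under addition, and for $\lambda\in\Fqt$ and $y\in C$ we have $\lambda\theta(y)=\theta(\theta^{-1}(\lambda)y)\in\theta(C)$. So $\theta(C)^\perp$ is taken in the usual sense.
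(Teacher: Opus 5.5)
Your proof is correct, and your first inclusion $\theta(C^\perp)\subseteq\theta(C)^\perp$ is exactly the paper's argument. For the reverse inclusion the paper uses dimension counting, which tacitly relies on $\theta(C)$ being a subspace with $\dim\theta(C)=\dim C$; you instead apply $\theta^{-1}$ directly and check that $\theta(C)$ is linear, which makes those tacit facts explicit.
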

\begin{proof}
    Let $z \in C^\perp$. Then $z \cdot c = 0$ for all $c \in C$. Applying $\theta$, \[\theta(z\cdot c) = \theta(z)\cdot \theta(c) = 0\] so $\theta(z) \in \theta(C)^\perp$. This implies $\theta(C^\perp) \subseteq \theta(C)^\perp$ and by dimension counting, equality follows.
\end{proof}
\begin{theorem}\label{thm:GSG.subfield.subcode.GSRS}
    Let $\tilde{\Gamma}(S, \eta, g, \Fqr)$ be a Generalized Skew Goppa code and assume $S^{-1}$ is also P-independent. Let $u_i = g_\rho\theta(N_{\rho-1}(s_i))g(s_i)^{-1}\eta_i$ for $i = 1,\ldots, n$. Then $\tilde{\Gamma}(S, \eta, g, \Fqr)$ is a subfield subcode of \[(\theta(\mathrm{GSRS}_\rho(S^{-1},\theta^{-1}(u))))^\perp = \theta(\mathrm{GSRS}_\rho(S^{-1},\theta^{-1}(u))^\perp).\]
\end{theorem}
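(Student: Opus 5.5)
The plan is to read off the statement directly from \Cref{th:newparitycheck}, identifying the parity-check matrix found there with a $\theta$-twist of a generator matrix of a GSRS code.

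First, by \Cref{th:newparitycheck}, $\tilde{\Gamma}(S,\eta,g,\Fqr)$ is the set of $c\in\Fqr^n$ with $Mc^{T}=0$, where
\[M=\begin{bmatrix}
1 & \cdots & 1\\
\theta(N_1(s_1^{-1})) & \cdots & \theta(N_1(s_n^{-1}))\\
\vdots & \ddots & \vdots\\
\theta(N_{\rho-1}(s_1^{-1})) & \cdots & \theta(N_{\rho-1}(s_n^{-1}))
\end{bmatrix} DRE .\]
Since $D$, $R$ and $E$ are diagonal, the product $DRE$ is diagonal with $i$-th entry $g_\rho\theta(N_{\rho-1}(s_i))g(s_i)^{-1}\eta_i=u_i$. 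Each $u_i$ is nonzero because $g_\rho\neq0$, $s_i\neq 0$, $g(s_i)\neq 0$ and $\eta_i\neq0$.

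Next, I write $u_i=\theta(\theta^{-1}(u_i))$ and $1=\theta(1)$. The $(k,i)$ entry of $M$ is then $\theta\big(N_k(s_i^{-1})\,\theta^{-1}(u_i)\big)$, so $M=\theta(G)$, where
\[G=\begin{bmatrix}
1 & \cdots & 1\\
N_1(s_1^{-1}) & \cdots & N_1(s_n^{-1})\\
\vdots & \ddots & \vdots\\
N_{\rho-1}(s_1^{-1}) & \cdots & N_{\rho-1}(s_n^{-1})
\end{bmatrix}\operatorname{diag}(\theta^{-1}(u_1),\ldots,\theta^{-1}(u_n)).\]
This $G$ is exactly the generator matrix of $\mathrm{GSRS}_\rho(S^{-1},\theta^{-1}(u))$ displayed after \Cref{def:gse.and.gsrs}.

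The one point needing care is that this code is legitimately a GSRS code. Its evaluation points are $S^{-1}$, which is P-independent by hypothesis, so by \cite[Thm.~10]{Lam_1986} the full Vandermonde $V_\theta(s_1^{-1},\ldots,s_n^{-1})$ has rank $n$. The weights $\theta^{-1}(u_i)$ are nonzero. The inequality $\rho\le n$ should be implicit, and in any case $G$ still generates the evaluation code.

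Then the row space of $M$ is $\theta(\mathrm{GSRS}_\rho(S^{-1},\theta^{-1}(u)))$, because applying $\theta$ entrywise to a spanning set of a code spans $\theta$ of the code: $\theta$ is semilinear and bijective. Hence
\[\tilde\Gamma=\big(\theta(\mathrm{GSRS}_\rho(S^{-1},\theta^{-1}(u)))\big)^\perp\cap\Fqr^n,\]
which is the subfield-subcode claim. The final equality in the statement is \Cref{lem:theta.commutes.with.perp} applied to $C=\mathrm{GSRS}_\rho(S^{-1},\theta^{-1}(u))$.

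I expect no real obstacle; the argument is bookkeeping. The steps to check carefully are that the row-reduction matrix $T$ from \Cref{th:newparitycheck} is invertible, so the row space is unchanged, and that the $\theta$-twisting is done entrywise consistently.
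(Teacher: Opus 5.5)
Your proposal is correct and follows essentially the same route as the paper: it identifies the parity-check matrix of \Cref{th:newparitycheck}, whose diagonal factor $DRE$ is $\operatorname{diag}(u_1,\ldots,u_n)$, with $\theta(G)$ for $G$ the generator matrix of $\mathrm{GSRS}_\rho(S^{-1},\theta^{-1}(u))$, intersects the resulting kernel with $\Fqr^n$, and uses \Cref{lem:theta.commutes.with.perp} for the final equality. Your extra checks (that the $u_i$ are nonzero, that $T$ is invertible, and that P-independence of $S^{-1}$ makes this a genuine GSRS code) are details the paper leaves implicit, and they do not change the argument.
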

    \begin{proof}
    Consider $u_i$ as in the theorem statement. Then, we can write the parity-check matrix from \Cref{th:newparitycheck} as 
         \[TH=\begin{bmatrix}
            1 & \cdots & 1\\
            \theta(N_1(s_1^{-1})) & \cdots & \theta(N_1(s_n^{-1}))\\
            \vdots & \ddots & \vdots\\
            \theta(N_{\rho-1}(s_1^{-1})) & \cdots & \theta(N_{\rho-1}(s_n^{-1}))\end{bmatrix}
            \begin{bmatrix}
                u_1\\&u_2\\&&\ddots\\&&&u_n
            \end{bmatrix}.\]
    Consider the GSRS code of length $n$, dimension $\rho = \deg g$, and distance $n - \rho + 1$ with evaluation points $S^{-1} = (s_1^{-1},\ldots, s_n^{-1})$ and multiplier weights $\theta^{-1}(u) := \{\theta^{-1}(u_1),\ldots, \theta^{-1}(u_n)\}$. A generator matrix for $\mathrm{GSRS}_{\rho}(S^{-1}, \theta^{-1}(u))$ is \[G = \begin{bmatrix}1 & \cdots & 1\\
    N_1(s_1^{-1}) & \cdots & N_1(s_n^{-1})\\
    \vdots & \ddots & \vdots\\
    N_{\rho -1}(s_1^{-1}) & \cdots & N_{\rho -1}(s_n^{-1})\end{bmatrix}\begin{bmatrix}\theta^{-1}(u_1)\\& \theta^{-1}(u_2)\\&&\ddots\\&&& \theta^{-1}(u_n)\end{bmatrix}.\] Notice that $TH = \theta(G)$ and since the row space of $G$ is $\mathrm{GSRS}_\rho(S^{-1},\theta^{-1}(u))$, the row space of $TH$ is $\theta(\mathrm{GSRS}_\rho(S^{-1},\theta^{-1}(u)))$ which applies $\theta$ to each codeword in $\mathrm{GSRS}_\rho(S^{-1},\theta^{-1}(u))$. Therefore,\[\tilde{\Gamma}(S, \eta, g, \Fqr) = (\ker_{\Fqt} \theta(G)) \cap \Fqr^n = (\theta(\mathrm{GSRS}_\rho(S^{-1},\theta^{-1}(u))))^\perp \cap \Fqr^n,\] as desired. The commutativity of $\theta$ with taking duals follows from \Cref{lem:theta.commutes.with.perp}.
    \end{proof}
    \begin{remark}
        Since $\mathrm{GSRS}_\rho(S^{-1},\theta^{-1}(u))$ is MDS, applying $\theta$ does not change its dimension or distance. Noting that the dual of an MDS code is MDS (see \cite[Thm.~2.4.3]{Huffman_Pless_2003}), we obtain an alternative proof of \Cref{cor:GSG.parameters} immediately from the subfield subcode characterization of $\tilde{\Gamma}(S,\eta, g, \Fqr)$.
    \end{remark}

\section{Generalized Skew Multivariate Goppa codes}\label{sec:skew.multivariate.goppa}
The theory we recalled and developed in the previous section will now be extended to multiple variables, in order to construct the multivariate version of Generalized Skew Goppa codes.

To start with, we recall the theory of multivariate Ore polynomial rings as developed in \cite{berardini24}, limited here to the ``almost commutative" case, that is, when only one variable is non-commutative.

Consider the ring $\Fqt[X_1,\ldots, X_m;\theta]$ of multivariate Ore polynomials with the usual sum and multiplication given by \begin{itemize}
    \item $X_i \cdot X_j = X_j \cdot X_i$
    \item $X_i \cdot a = a\cdot X_i, \forall a \in \Fqt$ and $i = 1,\ldots, m-1$
    \item $X_m \cdot a = \theta(a) \cdot X_m, \forall a \in \Fqt$
\end{itemize} where $\theta$ is the usual $q$-Frobenius map. Let $$R = \Fqt[\mathbf{X};\theta] = \Fqt[X_1,\ldots, X_m;\theta].$$ The ring $R$ is ``almost" commutative because the only variable which is noncommutative is $X_m$. Let \[L = \{u = (u_1,\ldots, u_m) \in \mathbb{N}^{m-1} \times t\mathbb{N}\}.\] 
By \cite[Prop.~1.1]{berardini24}, the center of $R$ is $$Z:=\Fq[\mathbf{X}^L] = \left\{\sum_{u\in L}a_uX^u \text{(finite sum)}\, : \, a_u \in \Fq\right\}.$$ It follows from \Cref{thm:invariant.elements.of.R} that the invariant polynomials of $R$ are in $\Fqt\cdot Z \cdot \mathbf{X}^l$, where $l\in\mathbb{N}^m$. 

We now proceed with the definition of Generalized Skew Multivariate Goppa codes. Fix nonempty subsets $S_1,\ldots, S_m \subseteq \Fqt$ such that the elements in $S_i$ are distinct for $i = 1, \ldots, m-1$ and the elements of $S_m$ are P-independent and $S_m$ stays P-independent upon inverting each element. Let $\mathcal{S} = S_1\times\cdots\times S_m \subseteq \Fqt^m$ and let $n_i = |S_i|$ so $n:= |\mathcal S| = \prod_{i=1}^m n_i$. Enumerate the elements of $\mathcal{S} = \{\bm{s}_1,\ldots, \bm{s}_n\}$ so that $\bm{s}_{j} = ({s}_{1j_1},\ldots, {s}_{mj_m})$ for $j = (j_1,\ldots, j_m)$. For $i = 1,\ldots, m-1$, let $g_i \in \Fqt[X_i]$ be an (invariant) polynomial such that $g_i({s}_{ij_i}) \neq 0$ for each $j = (j_1,\ldots, j_m) \in \mathcal{S}$. Let $g_m \in \Fqt[X_m;\theta]$ be an invariant polynomial such that $(X_m -{s}_{mj_m}, g_m)_r  = 1$ for each $j = (j_1,\ldots, j_m) \in \mathcal{S}$. Lastly, let $g = g_1\cdots g_m$ and define $\deg g = \prod_{i=1}^m \deg g_i$.

For each $i = 1,\ldots, m$ and each $j_i = 1,\ldots, n_i$, let $h_{ij_i}$ be the unique polynomial such that $\deg h_{ij_i} < \deg g_i$ and $$h_{ij_i}(X_i - {s}_{ij_i}) - 1 \in Rg_i.$$ In particular, dividing $g_i$ on the right by $X_i - s_{ij_i}$, we obtain \[g_i = q_{ij_i}(X_i - s_{ij_i}) + r_{ij_i} = q_{ij_i}(X_i - s_{ij_i}) + g_i(s_{ij_i}).\] Hence, setting $h_{ij_i} = -g_i(s_{ij_i})^{-1} q_{ij_i}$, gives the desired polynomial (just as in \Cref{sec:skewgoppa.parity.check}). Let $h_j = h_{1j_1}\cdots h_{mj_m}$ for $j = (j_1,\ldots, j_m) \in \mathcal{S}$. Lastly, choose $\eta_1,\ldots, \eta_{n_m} \in \Fqt^*$, and for $j = (j_1,\ldots, j_m) \in \mathcal{S}$, define $\eta_j = \eta_{j_m}$.

\begin{definition}\label{def:gsmg}
Let $\mathcal{S}, \eta,g$ and $h_j$ for $j = (j_1,\ldots, j_m) \in \mathcal{S}$ be as above. Let $r\mid t$ be an integer. The Generalized Skew Multivariate Goppa (GSMG) code $\tilde\Gamma(\mathcal{S}, \eta,g,\Fqr)\subseteq \Fqr^n$ is defined as \begin{equation*}
\begin{aligned}
    \tilde\Gamma(\mathcal{S}, \eta,g,\Fqr) &= \left\{c= (c_j)_{j \in \mathcal{S}} \in \Fqr^{\mathcal{S}} \, : \, \sum_{j \in \mathcal S} c_j\eta_jh_j = 0 \right\}\\
    &= \left\{c= (c_j)_{j \in \mathcal{S}} \in \Fqr^{\mathcal{S}} \, : \, \sum_{j \in \mathcal S} c_j\eta_jh_j \in Rg \right\}.
    \end{aligned}
\end{equation*}
\end{definition}

Note that we are fixing a particular enumeration of $\mathcal{S}$ which gives a bijection between elements of $\mathcal{S}$ and integers $\{1,\ldots, n\}$. 

\Cref{def:gsmg} admits the Multivariate Goppa code (\cite{Lopez_2023}, see \Cref{def:multgoppa}) as a special case, which we recover when $\eta_j = 1$ for all $j$ and $\theta = \mathrm{id}$. It also recovers the Generalized Skew Goppa code (\cite{gomez-torrecillas2023}, see \Cref{def:skew.goppa.code}) as a special case when $m = 1$.
\subsection{Parameters of the code}

We start by constructing a parity-check matrix in order to deduce the dimension and distance of the GSMG code.

Fix an element $\bm{s}_j$. We will determine $h_{ij_i}$ for each $i = 1,\ldots, m$. Let $\deg g_i = \rho_i$. We already showed the structure of $h_{mj_m}$ in \cref{eq:h_structure_skew} and by specializing to $\theta = \mathrm{id}$, we obtain
\begin{align*}
    q_{ij_i} = \sum_{b=0}^{\rho_i-1}q_{ij_i,b}X_i^b = \sum_{b=0}^{\rho_i-1}\sum_{l = b+1}^{\rho_i}s_{ij_i}^{l-b-1}g_{i,l}X_i^b.
\end{align*}  Additionally, \[g(\bm{s}_j) = g_1({s}_{1j_1})\cdots g_m({s}_{mj_m}).\] We defined $h_j = h_{1j_1}\cdots h_{mj_m}$ so writing \begin{align*}H_i &= \begin{bmatrix}
    q_{i1,0} &\cdots &q_{in_i,0}\\
    \vdots &\ddots &\vdots\\
    q_{i1,\rho_i-1} &\cdots & q_{in_i,\rho_i-1}
\end{bmatrix},\\ R_i &= \begin{bmatrix}
    g_i({s}_{i1})^{-1}\\
    &\ddots\\
    && g_i({s}_{in_i})^{-1}
\end{bmatrix},\\ \text{ and } E &= \begin{bmatrix}
    \eta_1\\&\ddots \\&&\eta_n
\end{bmatrix}\end{align*} a parity-check matrix for GSMG is \[H:=\left(\bigotimes_{i=1}^m H_iR_i\right)E.\] Equivalently, $\tilde\Gamma(\mathcal{S}, \eta,g,\Fqr)$ is a subfield subcode of the dual of the tensor product code \[\tilde T(\mathcal{S},g) := \left(\bigotimes_{i=1}^{m-1} \mathrm{GRS}_{\rho_i}(S_i,g_i)\right)\otimes \theta(\mathrm{GSRS}_{\rho_m}(S_m^{-1}, \theta^{-1}(u)))\] where $u = (u_1,\ldots, u_{n_m})$ and $u_{j_m} 
= g_{m,\rho_m}\theta(N_{\rho_m-1}({s}_{mj_m}))g_m({s}_{mj_m})^{-1}\eta_{j_m}$ for $j_m = 1,\ldots n_m$.

Utilizing the expression of $\tilde\Gamma(\mathcal{S}, \eta,g,\Fqr)$ as a subfield subcode of the dual of the tensor product code $\tilde T(\mathcal{S},g)$, we obtain the following result.

\begin{theorem}\label{thm:gsmg.code.parameters}
    The Generalized Skew Multivariate Goppa code $\tilde\Gamma(\mathcal{S},\eta, g, \Fqr)$ has parameters:
    \begin{itemize}
        \item Length $n = |\mathcal{S}|$,
        \item Dimension $k$ satisfying $n - \frac{t}{r}\deg g\leq k \leq n -\deg g$,
        \item Minimum distance $d \geq \min_{i\in [m]}\{\deg g_i + 1\}$.
    \end{itemize}
\end{theorem}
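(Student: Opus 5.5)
The proof follows that of \Cref{thm:multivariate.goppa.code.parameters}, using the description of $\tilde\Gamma(\mathcal{S},\eta,g,\Fqr)$ as a subfield subcode of $\tilde T(\mathcal{S},g)^\perp$. The length is immediate.

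\textbf{Dimension of $\tilde T(\mathcal{S},g)$.} Each factor $\mathrm{GRS}_{\rho_i}(S_i,g_i)$, for $i<m$, has dimension $\rho_i$, since the $s_{ij_i}$ are distinct and we may assume $\rho_i\le n_i$. For the last factor, $S_m^{-1}$ is P-independent by hypothesis, so \cite[Thm.~10]{Lam_1986} (as used in \cref{eqn:rank.of.H}) shows that $\mathrm{GSRS}_{\rho_m}(S_m^{-1},\theta^{-1}(u))$ has dimension $\rho_m$. By \Cref{lem:theta.rank.preserved}, applying $\theta$ preserves this dimension. Dimensions multiply under tensor products, so $\dim\tilde T(\mathcal{S},g)=\prod_i\rho_i=\deg g$, and hence $\dim \tilde T^\perp = n-\deg g$. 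Since $\tilde\Gamma=\tilde T^\perp\cap\Fqr^n$ and a subfield subcode has $\Fqr$-dimension at most the $\Fqt$-dimension of the code, we get $k\le n-\deg g$.

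\textbf{Lower bound on $k$.} By Delsarte's \Cref{th:delsarte}, $\tilde\Gamma^\perp=\Tr(\tilde T)$. Viewed over $\Fqr$, $\tilde T$ has dimension $\frac{t}{r}\deg g$, and the trace map is $\Fqr$-linear, so $\dim_{\Fqr}\Tr(\tilde T)\le \frac{t}{r}\deg g$. This gives $k\ge n-\frac{t}{r}\deg g$.

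\textbf{Minimum distance.} Write $C_i=\mathrm{GRS}_{\rho_i}(S_i,g_i)$ for $i<m$ and $C_m=\theta(\mathrm{GSRS}_{\rho_m}(S_m^{-1},\theta^{-1}(u)))$. Applying $(C_1\otimes C_2)^\perp=C_1^\perp\otimes\F^{n_2}+\F^{n_1}\otimes C_2^\perp$ inductively gives
\[\tilde T^\perp=\sum_{i=1}^m \Fqt^{n_1}\otimes\cdots\otimes C_i^\perp\otimes\cdots\otimes\Fqt^{n_m}.\]
Each $C_i^\perp$ is the dual of an MDS code of length $n_i$ and dimension $\rho_i$, so it is MDS with distance $\rho_i+1$. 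For $i<m$ this is classical. For $i=m$, the remark after \Cref{thm:GSG.subfield.subcode.GSRS} says GSRS codes are MDS and $\theta$ preserves weights, so $C_m$ is MDS; then \Cref{lem:theta.commutes.with.perp} and \cite[Thm.~2.4.3]{Huffman_Pless_2003} show $C_m^\perp$ is MDS. Write $C_i^\perp=D_i^\perp$ with $D_i=C_i$ and $d(D_i^\perp)=\rho_i+1$. Applying \Cref{lem:distance.dual.tensor.product} iteratively, grouping the first $i$ factors and then adjoining the next, gives $d(\tilde T^\perp)=\min_i\{\rho_i+1\}$. Since $\tilde\Gamma\subseteq\tilde T^\perp$, the bound $d\ge\min_{i\in[m]}\{\deg g_i+1\}$ follows.

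\textbf{Expected obstacle.} The step needing care is the multi-factor use of \Cref{lem:distance.dual.tensor.product}, which is stated only for two factors. The induction works because the lemma's formula describes $(C_1^\perp\otimes C_2^\perp)^\perp$ as a sum code. At each stage, the accumulated sum code of the first $i$ factors is the dual of a tensor product of duals, so the lemma applies with $C_1$ equal to that sum code and $C_2=C_{i+1}^\perp$. The distance of the sum code comes from the previous step, and the minimum propagates.
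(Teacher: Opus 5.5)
Your proof is correct and follows the paper's argument: you realize $\tilde\Gamma$ as the subfield subcode of $\tilde T(\mathcal{S},g)^\perp$, take the upper bound on $k$ from $\dim\tilde T(\mathcal{S},g)=\deg g$, the lower bound from Delsarte's theorem, and the distance from \Cref{lem:distance.dual.tensor.product}. You also spell out details the paper leaves implicit, namely the dimension and MDS property of the $\theta$-twisted GSRS factor and the induction extending the two-factor lemma to $m$ factors; note only that $\rho_i\le n_i$ is an implicit hypothesis of the statement (shared with the paper), not something you may assume without loss of generality.
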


\begin{proof}
    The length is obvious. Let $k = \dim \tilde\Gamma(\mathcal{S},\eta, g, \Fqr)$. Since $\tilde T(\mathcal{S},g)$ has dimension $\deg g$, its dual $(\tilde T(\mathcal{S},g))^\perp$ has dimension $n - \deg g$ and $\tilde\Gamma(\mathcal{S}, \eta,g,\Fqr)$ is a subfield subcode of $(\tilde T(\mathcal{S},g))^\perp$ so $\dim \tilde\Gamma(\mathcal{S},\eta, g, \Fqr) \leq n - \deg g$. We now show the lower bound on the dimension.

    By Delsarte's Theorem (\Cref{th:delsarte}), we have that for a code $C$ over $\Fqt$, \begin{equation}
        \left(C \cap \Fqr^n\right)^\perp = \Tr(C^\perp),
    \end{equation}
    and $\Tr: C \to \Tr(C)$ is a surjective $\Fqr$-linear mapping, so the dimension of $C$ regarded as a $\Fqr$ vector space is $\frac{t}{r}\cdot \dim C$. Applying this result to \[\tilde\Gamma(\mathcal{S}, \eta,g,\Fqr) = \left(\Tr(\tilde T(\mathcal{S},g))\right)^\perp\] we have \[\dim \Tr(\tilde T(\mathcal{S},g)) \leq \frac{t}{r}\deg g\] which implies \[\dim \tilde\Gamma(\mathcal{S}, \eta,g,\Fqr) \geq n - \frac{t}{r}\deg g.\]

    By \Cref{lem:distance.dual.tensor.product}, the minimum distance of $(\tilde T(\mathcal{S},g))^\perp$ is $\min_{i\in [m]}\{\deg g_i + 1\}$ and the minimum distance of $\tilde\Gamma(\mathcal{S}, \eta,g,\Fqr)$ follows.
\end{proof}

\begin{remark}
    The hypothesis on $S_m$ preserving P-independence under inversion is used only to argue the subfield subcode property of the GSMG code. The distance bound still holds because \cite{gomez-torrecillas2023} shows the distance of a GSG code is at least $\deg g_m + 1$. Hence, there must be some parity-check matrix $H_m$ such that every set of $\deg g_m$ columns is independent. For further details see \Cref{rem:gomez_distance_comparison}.
\end{remark}

\section{Conclusion}
In this paper, we constructed a multivariate version of Skew Goppa codes by using the multivariate Ore polynomial ring $\Fqt[X_1,\ldots, X_m;\theta]$, where the only noncommutative variable is the last one. For the sake of completeness, one could study a more general multivariate construction from the ring $\Fqt[X_1,\ldots, X_m;\theta_1,\dots,\theta_m]$. However, note that even for the linearized Reed--Muller codes of \cite{berardini24}, the ``almost commutative" case is the one giving the best parameters. 

We also showed that Generalized Skew Goppa codes are subfield subcodes of Generalized Skew Reed--Solomon codes. We were able to prove this result under the condition that the inverse set $S^{-1}$ of the P-independent set $S$ is itself P-independent. It is an open question if we can derive a parity-check matrix which does not need to resort to $S^{-1}$ to match the one of GSRS codes.

Finally, we believe it would be interesting to study the relation between GSG codes and the linearized Goppa codes introduced in \cite{CD23} using skew residues. 

\bibliographystyle{alpha}
\bibliography{refs}
        
\end{document}